\renewcommand{\=}{\coloneqq}			% definisce :=
\newcommand{\N}{\mathbb{N}}                     % the natural numbers
\newcommand{\M}{\mathbf{M}} 
\newcommand{\LL}{\mathcal{L}}
\newcommand{\s}{\mathbf{S}}
\newcommand{\Con}{\mathcal{C}}
\newcommand{\Z}{\mathbb{Z}}                     % the integer numbers
\newcommand{\R}{\mathbb{R}}                     % the real line
\newcommand{\C}{\mathbb{C}}                     % the complex plane
\renewcommand{\set}[2]{\left\{{#1}\mid{#2}\right\}}       % the set
\newcommand{\dd}{\mathrm d}			% definisce :=
\newcommand{\Tr}{\mathrm{Tr\,}}               % signum
\newcommand{\Id}{I}                   %Identity operator
\newcommand{\GL}{{\mathrm{GL}}}                 %Fredholm pairs
\newcommand{\Lin}{{\mathcal{B}\,}}      %linear operators
\newcommand{\Cl}{\mathcal{C}\,}      %linear operators
\newcommand{\AAA}{\mathcal{A}^{ad}\,}      %linear operators
\newcommand{\trace}{{\mathcal B_1}}      %compact operators
\newcommand{\SSS}{\mathrm{Sym}}         % symmetric matrices
\newcommand{\Mat}{\mathrm{Mat}}         %  matrices
\newcommand{\ispec}{\iota_{\scriptstyle{\mathrm{sp}}}}
\newcommand{\icon}{\iota_{\scriptstyle{\mathrm{deg}}}}
\newcommand{\spfl}{\mathrm{sf\,}}
\newcommand{\tr}{\mathrm{tr\,}}     % traccia
\newcommand{\Imm}{\mathrm{Im}}
\newcommand{\Real}{\mathrm{Re}}
\newcommand{\norm}[1]{\left\| #1 \right\|}			% norma
\DeclareMathOperator{\sgn}{sgn}		% signature
\DeclareMathOperator{\diag}{diag}		% diagonal matrix
\newcommand{\iMorse}[1]{m^-({#1})}		% indice di Morse
\newcommand{\email}[1]{\href{mailto:#1}{\textsf{#1}}}
\numberwithin{equation}{section}
\newtheorem{thm}{\sc Theorem}[section]  
\newtheorem{lem}[thm]{\sc Lemma}            
\newtheorem{prop}[thm]{\sc  Proposition}     
\newtheorem{defn}[thm]{\sc Definition}     
\newtheorem{rem}[thm]{\sc Remark}      
\newtheorem{ex}[thm]{\sc Example}
\newcommand{\trasp}[1]{{#1}^\mathsf{T}}	
\renewcommand{\s}{\mathbb{S}}
\title{A generalized index theory for non-Hamiltonian system}
\author{Alessandro Portaluri\thanks{The author is partially supported by “Progetto di Ricerca GNAMPA - INdAM”, codice CUP\_E55F22000270001.}, Li Wu\thanks{The author is  supported by NSFC No.~$\mathrm{11425105}$.}, Ran Yang\thanks{The author is supported by NSFC No.~$\mathrm{12001098}$, the Doctoral research start-up fund of East China
University of Technology (No.~$\mathrm{DHBK2019204}$) and the State Scholarship Fund from the CSC.}}
\date{\today}
\begin{document}
	
 \maketitle
 
\begin{abstract}

Morse index theory  in the classical framework provides an equality between the
spectral properties of the second variation of the Lagrangian functional and the oscillation properties of the space of solutions of the associated boundary value problem. 

For carrying over a similar result in the non-Hamiltonian context which can be useful for  investigating  the dynamical properties of dissipative
systems,   we  introduce a new topological invariant, the so-called “degree-index”  defined in terms of the  Brouwer  degree of a suitable determinant map of a boundary matrix, which provides one possible substitute of the Maslov index in this non-Hamiltonian framework and finally  we prove the equality between the Morse index and the degree-index in this non-selfadjoint setting through a new abstract trace formula. 

In the last part of the paper we apply our theoretical results to some 1D reaction-diffusion systems.

%Morse index theory 
%%provides an elegant and useful tool for describing several aspects  of a Lagrangian system in terms of its variational properties. I
%in the classical framework it provides an equality between the
%spectral properties of 
%%a second order selfadjoint differential operator arising by
%the second variation of the Lagrangian functional and the oscillation properties of the space of solutions of the associated boundary value problem. 
%%These oscillation properties are  described by means of the Maslov index, which is a symplectic invariant intimately related to the Hamiltonian nature of the problem. 
%
%Driven by the desire to 
%For geting an analogous result in the non-Hamiltonian context  useful for  investigating  the dynamical properties of dissipative
%systems, in this paper  we start to introduce a new topological invariant, the so-called “degree-index”  defined in terms of the  Brouwer  degree of a suitable determinant map of a boundary matrix, which provides one possible substitute of the Maslov index in this non-Hamiltonian framework and finally  we prove the equality between the Morse index and the degree-index in this non-selfadjoint setting through a new abstract trace formula. 
%
%In the last part of the paper we apply our theoretical results to some 1D reaction-diffusion systems. 

\vskip0.2truecm
\noindent
\textbf{AMS Subject Classification:} 58J30, 47H11, 55M25, 58J20.
\vskip0.1truecm
\noindent
\textbf{Keywords:} Spectral flow, Morse index, Non-Hamiltonian systems, Reaction–diffusion equations.
\end{abstract}

% % % % % % % % % % % % % % % % % % % % % % % % % % % % % % % % 
% % % % % % % % % % % % % % % % % % % % % % % % % % % % % % % %
% % % % % % % % % % % % % 
% % % % % % % % % % % % % 
% % % % % % % % % % % % % 
% % % % % % % % % % % % % % % % % % % % % % % % % % % % % % % %
% % % % % % % % % % % % % % % % % % % % % % % % % % % % % % % %

\section*{Introduction}

The  relation between the spectral properties of a solution of a second order selfadjoint boundary value problem and the topological properties of the associated variational equation along such a solution induced by the phase flow and encoded by an intersection index in the Lagrangian Grassmannian context could be traced back to Morse.  

Starting by the early works of Jacobi,  Morse in his monumental book \cite{Mor32}  provided  a precise formula  between the  Morse index  of the second variation of the Lagrangian action functional along an extremal and the total number of focal points along the extremal, which in a modern language could be described in terms of the  Maslov index.   Since then,  many generalizations to higher order ordinary differential operators,  minimal surfaces and elliptic partial differential operators,  were proved during the last decades.

The common denominator for all of these variational problems is the symplectic structure. In his pioneering paper,  Duistermaat \cite{Dui76} first recognize the crucial role played by the  Maslov index, seen as an intersection index in the Lagrangian Grassmannian manifold, for describing the Morse index. 

However out of this situation, the situation is much more involved and, as far  as we know, very few results are known. (For further details we refer the interested reader to  \cite{BCCJM22} and references therein). 

Driven by the relevance and by the central role played in a number of both geometrical and dynamical questions, in this paper we establish a new index theory in a non-Hamiltonian context.  The main problems to face with are to define in this situation both sides of the equality appearing in the Morse index theorem, namely the  Maslov index and the functional analytic class of operators, including the most peculiar ones appearing in the applications, and admitting a Morse index. 

Inspired by the construction of the “conjugate index” provided by authors in  \cite{MPP05} for  study of the distribution of conjugate points along a  semi-Riemannian geodesic and later by \cite{PW20} in the more general  selfadjoint case, in this paper we define a new counting named {\em degree-index} which is defined by means of a suspension of the complexified family of second order non-selfadjoint  boundary value problems.  

In this new theoretical framework, the crossing instants appearing in the description of the Maslov index are replaced by  zeros  of the determinant of a “boundary matrix” defined in terms of the fundamental solution and of the boundary conditions. So, as one could easily imagine, this new topological invariant is defined in terms of the Brouwer degree of this determinant map whose vocation is to count, algebraically, the total number of zeros.  

The second problem to face with is to define the class of operators for constructing an index theory.  The key for finding the right functional analytic class is based on the fact that the Brouwer degree of the determinant map of the boundary matrix coincides with the trace of an operator-valued one-form. Such a trace, at least in the selfadjoint case, is equal to the Morse index (cfr. \cite{PW20} and references therein for further details) and this equality has been proved by using the localization property of another important topological invariant provided by the spectral flow of an induced path of unbounded selfadjoint Fredholm operators. 

The literature on this invariant in the selfadjoint case is quite broad. However, we are not aware of results dealing with the unbounded and  non-selfadjoint case. For linear and bounded operators on Banach spaces this invariant has been constructed by authors in \cite{ZL99}. (Cfr. also \cite{Gar10} and references therein). So, in this paper, as by-product, we provide a  possible generalization of this invariant that could be of an independent interest.

 Through this invariant  we were able  to handle both the problems with the Morse index and the Maslov index, providing a new and we believe interesting way to generalize the classical Morse index theorem through a spectral flow formula to the non-Hamiltonian setting.  
 
 Beside of the interests in providing such a non-selfadjoint index theory it is worth noticing that the definition of the degree-index is quite straightforward and doesn't require the construction of an intersection theory which is, in general, a pretty involved task both from a topological and differential viewpoint. Another important advantage in common with the selfadjoint case, is that this invariant is much easier to compute, at least numerically, with respect to the Maslov index.

In the last part of the paper, we apply our theoretical results for computing the Morse index of a steady state solution of a 1D reaction-diffusion systems on a bounded interval and under Dirichlet boundary conditions. In particular, we describe the degree-index for an interesting class of operators in the plane in terms of two families of conjugate points having a diophantine relation. Finally we construct a counterexample in this non-selfadjoint case to the equality  between the Morse index and the total number of conjugate points in a non-generic case. This final example which is in striking contrast with the classical selfadjoint case, in particular, put on evidence how far is the non-selfadjoint case by the classical one. 

The paper is organized as follows:

\tableofcontents

%%%%%%%%%%%%%%%%%%%%%%%%%%%%%%%%%%%%%%%%%%%%%%%%%%%%%%%%%%%%%%%%%
%%
%%
%%
%%
%%
%%
%%
%%%%%%%%%%%%%%%%%%%%%%%%%%%%%%%%%%%%%%%%%%%%%%%%%%%%%%%%%%%%%%%%%

\section*{Acknowledgements}

We thank the anonymous reviewers whose comments and suggestions helped improve and clarify this manuscript.

\section{Trace formula and  crossings of the imaginary axis}
Let $(H,\langle \cdot,\cdot \rangle)$ be a complex separable Hilbert space with induced norm denoted by $|\cdot |$. We denote by $\Cl(H)$ the set of closed linear operators on $H$, by $\Lin(H)$ the $C^*$-algebra of the bounded and linear operators on $H$ whose norm will be denoted by $\|\cdot\|$, by $\Lin_0(H)$ be the ideal of compact  operators of $\Lin(H)$. 

Let $T \in \Lin_0(H)$ and  $T^*$ be its adjoint. It is  immediate to check  that  $T^*T$ is a compact self-adjoint non-negative definite linear operator. For $j \in \N^*$, we let  $ \lambda_j(T^* T)$  be the non-zero eigenvalues of $T^*T$ where each eigenvalue is repeated according to its own multiplicity and  we define the  {\em $j$-th singular value\/} or {\em $j$-th s number\/} as the non-negative number given by   $s_j(T)\=\big[\lambda_j(T^*T)^{1/2}\big]$.
We set $ \trace(H)\=\Big\{T\in \Lin_0(H)| \sum_{j=1}^\infty s_j(T)<+\infty\Big\}$. Endowing  $\trace(H)$ with the {\em trace norm\/} given  by  
\begin{equation}\label{eq:trace-norm}
\|T\|_\trace\= \sum_{j=1}^\infty s_j(T),
\end{equation}
then  $\trace(H)$  turns into a Banach space,  termed the space of {\em trace class operators\/} and its elements are   {\em trace class operators\/} on $H$. We refer the reader to \cite{PW20} and references therein for the basic properties of trace class operators.

 Given $A \in \Lin(H)$, we denote the spectrum and the resolvent set of $A$ by $\sigma(A)$ and $\rho(A)$, respectively. Recall that the resolvent of $A$ is defined by 
\[
R(A)\equiv R(\lambda, A):=(A-\lambda \Id)^{-1}, \qquad \lambda \in \rho(A).
\]
\begin{defn}
 $A \in \Cl(H)$ is termed a {\bf  trace class resolvent operator} if its resolvent operator $R(A)$ is in the trace class; i.e.  $R(A) \in \Lin_1(H)$.
\end{defn}
We start by observing that a uniformly continuous perturbation with linear and bounded operators of a trace class resolvent operator produces a family of  continuous trace class resolvent operator.    
\begin{lem} \label{lm:spectrum_trace_class}
Let  $\Omega\subset \C$ be a  compact subset and $A$ be a  selfadjoint  trace class resolvent operator. Given a continuous family $z \mapsto C_z$ in $ \Lin(H)$ parametrized by $\Omega$, we let $L_z\= A+ C_z$.  

Then we have:
\begin{enumerate}
	\item[(i)] For every $z \in \Omega$, $L_z$ is a trace class resolvent operator
	\item[(ii)] There exists $M>0$ such that  the map  
\[
\Omega \ni z\longmapsto R(L_z) \in \Lin_1(H)
\]
is continuous for every $\lambda\in\C$ with $ |\Im \lambda |>M$.
\end{enumerate}

\end{lem}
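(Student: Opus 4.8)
The plan is to reduce both assertions to a Neumann-series (bounded-perturbation) argument together with the fact that $\Lin_1(H)$ is a bilateral ideal in $\Lin(H)$. I would begin by recording the relevant properties of the unperturbed operator: since $A$ is selfadjoint, $\sigma(A)\subset\R$, so every $\lambda\in\C$ with $\Im\lambda\neq 0$ lies in $\rho(A)$ and obeys $\norm{R(\lambda,A)}\le |\Im\lambda|^{-1}$; moreover, the first resolvent identity $R(\lambda,A)-R(\mu,A)=(\lambda-\mu)R(\lambda,A)R(\mu,A)$ together with the ideal property shows that the single hypothesis $R(A)\in\Lin_1(H)$ propagates to $R(\lambda,A)\in\Lin_1(H)$ for every $\lambda\in\rho(A)$, in particular for every non-real $\lambda$.

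Since $z\mapsto C_z$ is norm-continuous on the compact set $\Omega$, the quantity $M\=\max_{z\in\Omega}\norm{C_z}$ is finite, and it will serve as the constant in the statement (any larger value works as well, and the degenerate case $M=0$, i.e.\ $L_z\equiv A$, is trivial). Fix $\lambda\in\C$ with $|\Im\lambda|>M$. For every $z\in\Omega$ one then has $\norm{R(\lambda,A)C_z}\le\norm{C_z}/|\Im\lambda|<1$, so $\Id+R(\lambda,A)C_z$ is invertible in $\Lin(H)$. Since $C_z$ is bounded, $L_z=A+C_z$ is closed with $D(L_z)=D(A)$, and the factorization $L_z-\lambda\Id=(A-\lambda\Id)\bigl(\Id+R(\lambda,A)C_z\bigr)$ on $D(A)$ exhibits $L_z-\lambda\Id$ as a composition of invertible maps; hence $\lambda\in\rho(L_z)$ with
\[
R(\lambda,L_z)=\bigl(\Id+R(\lambda,A)C_z\bigr)^{-1}R(\lambda,A).
\]
As the first factor is bounded and $R(\lambda,A)\in\Lin_1(H)$, the ideal property yields $R(\lambda,L_z)\in\Lin_1(H)$, which is (i).

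For (ii) I would keep such a $\lambda$ fixed and realise $z\mapsto R(\lambda,L_z)$ as a composition of three continuous maps: first $z\mapsto C_z$, continuous $\Omega\to\Lin(H)$ by hypothesis; then $B\mapsto(\Id+R(\lambda,A)B)^{-1}$, continuous on the open subset of $\Lin(H)$ on which it is defined (continuity of inversion in a Banach algebra), a subset containing every $C_z$ by the previous paragraph; and finally right multiplication by the \emph{fixed} operator $R(\lambda,A)$, which is continuous from $\Lin(H)$ into $\Lin_1(H)$ since $\norm{T\,R(\lambda,A)}_1\le\norm{T}\,\norm{R(\lambda,A)}_1$. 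Composing these maps shows that $z\mapsto R(\lambda,L_z)$ is continuous with values in $\Lin_1(H)$, which is (ii).

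I do not expect a genuine obstacle here. The only two points requiring some care are the passage from the bare hypothesis $R(A)\in\Lin_1(H)$ to trace-class membership of all the non-real resolvents of $A$ (handled by the resolvent identity and the ideal property) and the uniformity in $z\in\Omega$ of the Neumann bound on $\norm{R(\lambda,A)C_z}$, which is precisely where compactness of $\Omega$ enters; everything else is the ideal property of $\Lin_1(H)$ and standard bounded-perturbation theory.
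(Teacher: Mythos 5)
Your proof is correct and follows essentially the same route as the paper: a Neumann-series factorization of $L_z-\lambda\Id$ using the selfadjoint bound $\norm{R(\lambda,A)}\le|\Im\lambda|^{-1}$ for $|\Im\lambda|>M\ge\sup_z\norm{C_z}$, followed by the bilateral-ideal property of $\Lin_1(H)$. The only differences are cosmetic (you factor the perturbation on the right rather than the left) or amount to filling in details the paper leaves implicit, namely the propagation of trace-class membership to all non-real $\lambda$ via the resolvent identity and the explicit composition argument for the continuity claim in (ii).
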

\begin{proof} We start proving item (i). 
	 By an algebraic manipulation, we get that 
	\begin{multline}
	\lambda \Id - L_z=(\lambda\Id - A)- C_z= (\lambda \Id -A)-C_z(\lambda \Id- A)^{-1}(\lambda \Id- A)\\ 
	=\Big[\Id- C_z(\lambda\Id-A)^{-1}\Big](\lambda \Id -A).
	\end{multline}
Since $\Omega$ is compact, then there exists $M>0$ such that $\|C_z\|< M $, for every $z\in \Omega$. We let $\lambda =a+i b$ with $a,b\in \R$ and we observe that, being $A$ selfadjoint, its spectrum is real. So, by a direct calculation, if $\lambda \in \sigma(A) \subset \R$, we get that  the spectrum of the operator $\Id- (A-a\Id)/(i b)$ is of the form   $1+ic$ where $c:=(\lambda-a)/b \in \R$. Then, we have $\norm{\left(\Id- \dfrac{A-a\Id }{i b}\right)^{-1}}\leq 1$.  Now, we let  $|b|>M$ and  we observe that 
	\[
	\norm{C_z(\lambda\Id -A)^{-1}}\le \dfrac{\norm{C_z}}{|b|}\norm{\left(\Id- \dfrac{A-a\Id }{i b}\right)^{-1}}\leq\dfrac{\norm{C_z}}{M}<1 \qquad \textrm{ for each }\, z \in \Omega.
	\]
	By this computation, we conclude that the operator $B_z:=\Id- C_z(\lambda\Id-A)^{-1}$ is invertible for each $z \in \Omega$. So, we get 
	\begin{multline}\label{eq:compact-resolvent-stuff}
		R(L_z)\equiv(\lambda\Id- L_z)^{-1}=(\lambda\Id -A-C_z)^{-1}=(\lambda \Id-A)^{-1}(\Id-C_z(\lambda \Id -A)^{-1})^{-1}\Longrightarrow \\ 
		R(L_z)=R(A)\, B_z^{-1}.
	\end{multline}
	Since $R(A)$ is in the trace class, $B_z$  is bounded and the set $\Lin_1(H)$ is an ideal we get that $R(L_z)$ is in the trace class for each $z \in \Omega$. This concludes the proof of the item (i). The second item is a straightforward consequence of the continuity of the operator path $z \mapsto C_z$.  
\end{proof} 

The next result provides a trace formula in terms of the eigenvalues crossing the imaginary line from left to right and viceversa. 

Before, we define the following rectangle $\Omega:=[0,1] \times [-M, \times M]\subset \C$ where $M$ is the same constant appearing in Lemma~\ref{lm:spectrum_trace_class}. \footnote{With abuse of notation, we identify the ordered pair $(t,s)$ with the complex number $z:=t+is$.}
\begin{prop}\label{thm:lemma-4}
Let  $R\in \Cl(H)$ be a closed and invertible operator having bounded inverse and such that $R^{-1}\in \Lin_1(H)$.  

For $t \in [0,1]$, we  let $ C_t \in \Lin(H)$ and we define the operators  $A_t:=R+C_t$  and  $A_z:=A_t+isI$ where $z:=t+is$. Moreover, we assume that the following two conditions hold: 
	\begin{itemize}
	\item the path $t\mapsto C_t$ is of regularity class $\mathscr C^1$ 
	\item there exists $c>0  $ such that $\sigma(A_z)\cap \{\pm c+is\ | \ s\in\R  \}=\emptyset$ for every $(t,s)\in \Omega$ and $A_z$ is invertible for all $(t,s)\in  \partial\Omega $	
	\end{itemize}
Then, we have
 
	\begin{equation}\label{eq:trace-formula-local}
		\dfrac{1}{2\pi i} \int_{\partial\Omega}\Tr \dd A_z A_z^{-1} = l-m
	\end{equation}
	where $l$ (resp. $m$) is  the number of eigenvalues (counted with multiplicity) crossing the imaginary axis from left to right (resp. from right to left). 
\end{prop}
 \begin{proof}
Let \[
	P_t \=-\dfrac{1}{2\pi i} \int_{\partial([-c,c]\times [-M\times M])}(A_t-\lambda I)^{-1}\dd\lambda .
	\]
	Since $A_t-\lambda I$ is invertible on $\partial([-c,c]\times [-M\times M])$, $P_t$ is well-defined and its rank is equal to the sum of the rank of  all spectral subspaces corresponding to the the eigenvalues of $A_t$ belonging to the region $[-c,c]\times [-M,M]$ of the complex plane.

In particular, we get that the path $t\to P_t$ is continuous and  of finite rank projectors; moreover  $P_tA_t=A_tP_t$ and   $\dim \Imm P_t$ is  constant.

Since we are interested in counting the eigenvalues crossing the imaginary line, as consequence of  the above discussion, without leading of generalities we can only restrict to consider the eigenvalues in the region $[-c,c]\times [-M,M]$ and so, we  we only need to  count the eigenvalues of $P_tA_tP_t$ crossing the imaginary axis.

The spectrum of  $P_tA_tP_t\big\vert_{\Imm P_t}$ is a subset of the spectrum of $A_t$ and it is given by a continuous unordered $n$-tuple of  complex numbers. By \cite[Theorem 5.2, pag. 109]{Kat80}, there exist $n$  single-valued, continuous function $\lambda_n$ such that the set
\[
\Lambda:=\Set{\lambda_k|1\le k\le n}
\]
corresponds to the set of the $n$ eigenvalues of  $t\mapsto P_tA_tP_t\big\vert_{\Imm P_t}$. By using this path of projectors, we get that the following decomposition holds 
\begin{multline}
A_z=M_zN_z, \textrm{ where } M_z=P_t+Q_t(A_t+is\Id)Q_t\\ \textrm{ and } N_z=Q_t+P_t(A_t+is \Id)P_t,\qquad \textrm{ for }  z=t+is \in \Omega
\end{multline}
By invoking Lemma~\ref{thm:importante}, we get 
\begin{equation}
	\dfrac{1}{2\pi i} \int_{\partial \Omega} \Tr \dd A_z A_z^{-1} = \dfrac{1}{2\pi i} \int_{\partial \Omega} \Tr \dd N_z N_z^{-1}+ \dfrac{1}{2\pi i} \int_{\partial \Omega} \Tr \dd M_z M_z^{-1}\\= \dfrac{1}{2\pi i} \int_{\partial \Omega} \Tr \dd N_z N_z^{-1}
\end{equation}
where the last equality follows being $ \Tr \dd M_z M_z^{-1}$ an exact one-form  and by Lemma~\ref{lm:trace_det}, we get  that 
\begin{equation}\label{eq:riduzione}
\dfrac{1}{2\pi i} \int_{\partial \Omega} \Tr \dd N_z N_z^{-1} =\dfrac{1}{2\pi i} \int_{\partial \Omega} \dd \log\det [P_tA_tP_t+is \Id]\Big\vert_{\Imm P_t} .
\end{equation}
We set $S_z\=[P_tA_tP_t+is \Id]|_{\Imm P_t}$ and we observe that the (RHS) in Equation~\eqref{eq:riduzione} coincide with the Brouwer degree $\deg(\det S_z,\Omega,0)$ of the map $\det S_z=\prod (\lambda_k(t)+i s)$ being, in fact, the  winding number of the function $\det S_z$. (Cfr. \cite[Pag. 4]{DM21}, for further details).   So, in order to conclude the proof we only need to prove that  
\[
\deg(\det S_z,\Omega,0)=l-m.
\]
So, let us consider the eigenvalue $\lambda_i$ and we observe that only four cases can occur (according to the sign of $\lambda_i$ at the ends). More precisely 
\begin{equation}\label{eq:cases}
	(I)\ \begin{cases}
		\Real \lambda_i(0)<0\\
		\Real \lambda_i(1)>0 
	\end{cases},\quad  
	(II)\ \begin{cases}
	\Real	\lambda_i(0)>0\\
	\Real	\lambda_i(1)<0
	\end{cases},\quad 
	(III)\ \begin{cases}
	\Real	\lambda_i(0)<0\\
	\Real	\lambda_i(1)<0
	\end{cases},\quad
	(IV)\ \begin{cases}
	\Real	\lambda_i(0)>0\\
	\Real	\lambda_i(1)>0
	\end{cases}.
\end{equation}
For each one of these four cases, we construct the following four homotopies  
\begin{multline}
	h_{(I)}(\sigma,t)\=(1-\sigma)\lambda_i(t) + \sigma\left(t-\dfrac12\right), \qquad h_{(II)}(\sigma,t)\=(1+\sigma)\lambda_i(t) + \sigma\left(-t+\dfrac12\right)\\
	h_{(III)}(\sigma,t)\=(1-\sigma)\lambda_i(t) -\sigma,\qquad h_{(IV)}(\sigma,t)\=(1-\sigma)\lambda_i(t) + \sigma,
\end{multline}
where $(\sigma,t)\in[0,1]\times[0,1]$.  Notice that for every $\sigma\in[0,1]$ all functions $h_{(I)}, h_{(II)}, h_{(III)}, h_{(IV)}$ have no zeros for $t=0$ and $t=1$.
Thus, there exist $h,k,l,m \in \N$ such that $\det S_z$ is a homotopic (through an admissible homotopy) to the following function
\[
\phi(t,s)\=(1+is)^k (-1+is)^h \left(t-\dfrac12+ is\right)^l\left(-t+\dfrac12+is\right)^m
\]
where $n=h+k+l+m$. Moreover the functions $s \mapsto 1+ is$ and  $s \mapsto -1+ is$ are homotopic to  the constant functions $1$ and $-1$ respectively through the homotopy $s \mapsto 1+ i\lambda s$ and $s \mapsto - 1+ i\lambda s$, for $\lambda \in [0,1]$ respectively. Thus, we have
\begin{align}
	\deg(\det S_z, \Omega, 0)&=
	\dfrac{1}{2\pi i} \int_{\partial \Omega} \dd \log\det S_z= \dfrac{1}{2\pi i}\int_{\partial\Omega} \dd \log\left[\left(t-\dfrac12+ is\right)^l\left(-t+\dfrac12+is\right)^m\right]\\&= \dfrac{1}{2\pi i}\int_{\partial \Omega}  \dd \log \left(t-\dfrac12+ is\right)^l+ \dfrac{1}{2\pi i}\int_{\partial \Omega}  \dd \log\left(-t+\dfrac12+is\right)^m\\&= l-m.
\end{align}
This concludes the proof. 
\end{proof}
\begin{rem}\label{rem:explain-of-spetral-flow}
	If $A_z$ is invertible for every $z\in\Omega$, then the (LHS) of Equation~\eqref{eq:trace-formula-local} vanishes since $\Tr \dd A_z A_z^{-1}$ is a closed operator-valued one form on $\Omega$ as follows by Lemma~\ref{thm:importante}. From the geometric viewpoint,  formula \eqref{eq:trace-formula-local} describes the net number of eigenvalues of $A_t$ crossing the imaginary line from the left to the right side as the parameter $t$  runs from 0 to 1. During this evolution, eigenvalues of $P_t$ belongs  to the rectangle $[-c,c]\times[-M,M]$. The role of $c$ is to guarantee the existence of decomposition in Lemma~\ref{thm:importante}.
\end{rem}

The  next technical result which is interesting in its own, states that  it is possible to construct an explicit linear perturbation in order the perturbed family only have isolated singular points. 
\begin{lem}\label{lm:analytic_path}
Let $A\in \Cl(H)$ be a selfadjoint operator having compact resolvent, 
let $A_t=A+C_t$ and we assume that the path $t \mapsto C_t$ is real analytic.  For $\lambda\in \R$, we let 
\[
 S_\lambda :=\Set{(t,s)\in[0,1]\times [-M,M]| \ A_t+i s\Id +\lambda \Id \textrm{ is not invertible}}.
 \]
Then, for each $c>0$, there exists $\lambda$ with  $|\lambda|<c$, the cardinality of $S_\lambda$ is finite.
\end{lem}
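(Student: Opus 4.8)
The plan is to reduce the statement to a finite–dimensional, purely real–analytic question by means of a Riesz projection, and then to invoke the elementary fact that a real–analytic function on an interval which is not identically zero has only isolated zeros. First I would fix the relevant region of the complex plane once and for all. Since $t\mapsto C_t$ is real analytic on the compact interval $[0,1]$, it is in particular continuous, so $R\=\sup_{t\in[0,1]}\|C_t\|<\infty$. As $A$ is selfadjoint, $\|(A-\zeta\Id)^{-1}\|=\mathrm{dist}(\zeta,\sigma(A))^{-1}$, and the Neumann–series argument already used in Lemma~\ref{lm:spectrum_trace_class} gives $\sigma(A_t)\subset\{\zeta\in\C\mid\mathrm{dist}(\zeta,\sigma(A))\le R\}$ for every $t\in[0,1]$; in particular every eigenvalue of $A_t$ satisfies $|\Imm\zeta|\le R$. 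Setting $K'\=\max\{R,M\}+1$ and choosing, using the discreteness of $\sigma(A)$, a number $K>c$ so large that $\sigma(A)$ avoids $[K-R,K+R]\cup[-K-R,-K+R]$, the boundary $\Gamma$ of the box $\Omega_0\=[-K,K]\times[-K',K']\subset\C$ is disjoint from $\sigma(A_t)$ for all $t\in[0,1]$, while every $\zeta=-\lambda-\mathrm is$ with $|\lambda|<c$ and $|s|\le M$ lies in the interior of $\Omega_0$.

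Next I would carry out the finite–dimensional reduction. The family $A_t=A+C_t$ has the fixed domain $\mathcal D(A)$ and compact resolvent, so after complexifying $t$ to a thin simply connected neighbourhood $\mathcal U\supset[0,1]$ on which $C_t$ is holomorphic, it becomes a holomorphic family of type (A). Because $\Gamma$ meets no $\sigma(A_t)$, the Riesz projection $P_t\=\tfrac1{2\pi\mathrm i}\oint_\Gamma(\zeta\Id-A_t)^{-1}\,\dd\zeta$ is a holomorphic family of finite–rank projections of constant rank $N$, commuting with $A_t$, and by the localisation of Step~1 every eigenvalue of $A_t$ that can enter the analysis of $S_\lambda$ (for $|\lambda|<c$) is an eigenvalue of the analytic $N\times N$ family $B_t\=P_tA_tP_t\big\vert_{\Imm P_t}$. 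Writing $p(t,\zeta)\=\det(\zeta\Id-B_t)$, a monic polynomial of degree $N$ in $\zeta$ whose coefficients are real analytic in $t$, one gets, for $|\lambda|<c$,
\[
S_\lambda=\bigl\{(t,s)\in[0,1]\times[-M,M]\ \bigl|\ p(t,-\lambda-\mathrm is)=0\bigr\}.
\]

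Finally I would exploit analytic perturbation theory for matrices (\cite{Kat80}, Ch.~II and Ch.~VII): the exceptional (branch) points of $B_t$ are isolated in $\mathcal U$, hence finitely many in $[0,1]$, and removing them decomposes $[0,1]$ into finitely many open subintervals $I_1,\dots,I_m$ on each of which the eigenvalues of $B_t$ are given by real–analytic functions $\nu_{j,1},\dots,\nu_{j,N}\colon I_j\to\C$, continuous on $\overline{I_j}$ and real analytic there in the local uniformising variable at the endpoints. Let $E\=\{-c_0\in\R\mid\Real\nu_{j,i}\equiv c_0\text{ on some }I_j\}$, a finite set, and fix any $\lambda\in(-c,c)\setminus E$. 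If $(t,s)\in S_\lambda$ with $t$ not a branch point, then $-\lambda-\mathrm is=\nu_{j,i}(t)$ for some $j,i$ with $t\in I_j$, so $\Real\nu_{j,i}(t)=-\lambda$ and $s=-\Imm\nu_{j,i}(t)$; since $\lambda\notin E$, the function $\Real\nu_{j,i}+\lambda$ is real analytic and not identically zero on $I_j$, hence has finitely many zeros in $\overline{I_j}$ (the analytic extension at the endpoints rules out accumulation there), and each such $t$ pins down a unique $s$. As there are finitely many pairs $(j,i)$, finitely many branch points, and $\sigma(B_{t})$ is finite at each branch point, $S_\lambda$ is finite; and since $E$ is finite, such $\lambda$ exists inside $(-c,c)$.

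The delicate point is the last step: in contrast with the selfadjoint case (Rellich's theorem, where there are no branch points), the eigenvalue functions of $B_t$ may genuinely carry algebraic branch points, and one must ensure both that these are finite in number on $[0,1]$ and that the real parts of the local branches are either constant or have isolated level sets up to and including the branch points. Both facts follow from Kato's structure theory after complexifying $t$, but they are the real content of the proof; the remainder is the resolvent estimate of Lemma~\ref{lm:spectrum_trace_class} together with the Riesz–projection localisation.
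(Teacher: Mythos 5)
Your overall strategy — push the problem to a finite-dimensional analytic eigenvalue family via a Riesz projection and then use that a real-analytic function which is not identically constant has isolated level sets, discarding a finite exceptional set of values of $\lambda$ — is the same as the paper's. The genuine gap is in the very first reduction: you need a single rectangle $\Omega_0$ whose boundary $\Gamma$ avoids $\sigma(A_t)$ for \emph{every} $t\in[0,1]$ simultaneously, and you justify the vertical sides by choosing $K>c$ with $\sigma(A)$ disjoint from $[K-R,K+R]\cup[-K-R,-K+R]$. Discreteness of $\sigma(A)$ does not provide such a gap: a selfadjoint operator with compact resolvent may have spectral gaps shrinking to zero, so no interval of length $2R$ avoiding $\sigma(A)$ need exist beyond $\pm c$ (or anywhere). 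Worse, even when $A$ has such gaps, the segments $\{\pm K\}\times[-K',K']$ must miss the eigenvalues of $A_t$ for \emph{all} $t$, and eigenvalues of the family generically cross any fixed vertical line as $t$ varies — this is exactly the crossing phenomenon the spectral flow of this paper is built to count. Concretely, for $\sigma(A)=\set{\pm\sqrt{n}}{n\in\N}$ and $C_t=2t\,\Id$ the set $\bigcup_{t\in[0,1]}\sigma(A_t)$ contains a neighbourhood of every real number of large modulus as well as of every $K\ge 1$, so no admissible $K$ exists at all; the lemma still holds there, but your global family $P_t$ of constant-rank projections, the matrix family $B_t$, the polynomial $p$, and the identification of $S_\lambda$ with its zero set are all unavailable.

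The paper repairs exactly this point by localizing in $t$: for each $t_0$ it chooses a contour adapted to the single (discrete) spectrum $\sigma(A_{t_0})$, notes by continuity that the contour stays spectrum-free for $t$ in a small interval around $t_0$, runs the finite-dimensional Kato argument on that interval, and then covers $[0,1]$ by finitely many such intervals via a compactness/Lebesgue-number argument, a finite union of finite sets being finite. Note also that it suffices to capture eigenvalues with real part in a thin band $[-\delta,\delta]$ about the imaginary axis (one only needs \emph{some} $\lambda$ with $|\lambda|<c$, and small $\lambda$ is the hard case), which is what the paper does; insisting on the full band $[-c,c]$ only aggravates the contour problem. Once you localize in this way, the rest of your argument (Puiseux branches at finitely many exceptional points, real-analytic eigenvalue branches in between, the finite set $E$ of values with $\Real\nu\equiv-\lambda$) goes through and essentially coincides with the paper's proof.
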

In what follows we refer to $S_\lambda$ as the {\bf singular set}.
\begin{proof}
		Since $[0,1]$ is compact,  it is enough to prove the result in a sufficiently small neighborhood of each $t\in [0,1]$. Without loss of generality, we only need to prove the result for a small neighborhood of $0$, namely, in  $S_\lambda^\epsilon=\{(t,s)\in S_\lambda\ | \ t\in[0,\epsilon] \}$  for some $\epsilon>0$. 
		
		Let $M>0$ such that,  for every $t \in [0,1]$,  $\|C_t\|< M$ and we let 
		\[
		\sigma(A+C_0)\cap i \R=\{\lambda_i, 1\le i\le k \}.
		\]
		Let $D_{\delta}=[-\delta,\delta]\times [-M,M]$. Chose $\delta >0$ sufficiently small such that $D_{\delta}\cap \sigma(A+C_0)=\{\lambda_i, 1\le i\le k\}$. Then, there exists $\epsilon>0$ such that $\sigma(A_t)\cap \partial D_\delta=\emptyset$ for $t\in [0,\epsilon]$.

Let $P_t= \displaystyle \dfrac{1}{2\pi i}\int_{\partial D_\delta} (z \Id -A_t)^{-1} dz$. Then, $P_t$ is the projection to the total eigenspace of $A_t$ corresponding to the eigenvalues of $A_t$ in $D_{\delta}$ and we have $P_tA_t=A_tP_t$.
Then 
\[
\sigma (P_tA_tP_t |_{\Imm P_t})=\sigma (A_t) \cap D_\delta.
\]
Since $t\mapsto A_t$ and $t\mapsto P_tA_tP_t$ are real analytic and $\dim \Imm P_t$ is finite, we can extend them to a complex analytic path  in  a neighborhood of $[0,\epsilon]$ that we will denote by  $P_z$ and $P_zA_zP_z$, respectively.

By \cite[Chapter II, Section 1]{Kat80}, there exist $k$ holomorphic  functions $\mu_1,\mu_2,\cdots,\mu_k$ representing the eigenvalues of $z\mapsto P_zA_zP_z|_{\Imm P_z}$ repeated according to their multiplicity, having only algebraic singularities and whose values  of all of their branches at $z$ coincide with the eigenvalues of $P_zA_zP_z|_{\Imm P_z}$  at $z$. So, we get 
\[
S_{\lambda}^\epsilon=\bigcup_{i=1}^k\Set{(t,s) \in [0,\epsilon]\times [-M, M]|\mu_i(t)+is +\lambda=0}.
\] 
Let $T_\lambda^\epsilon:= \Set{t \in [0,\epsilon]|(t,s)\in S_\lambda^\epsilon}$. We will show that $T_\lambda^\epsilon$ is a finite set for $\lambda\in \R$ a.e.

In fact, since the singularities of $\mu_i$ are discrete (being $\mu_i$ analytic functions), then for  $\lambda\in \R$ a.e., $T_\lambda^\epsilon$ doesn't contain any singularity of  $\mu_i$. On the contrary, we assume that $t_0$ is a singularity of $\mu_i$. By definition of $S_\lambda^\epsilon$, it follows that $is+\lambda=-\mu_i(t_0)$ and so $\lambda=-\Re {\mu_i(t_0)}$. Then, by choosing a $\lambda$ out of the images of all $\mu_i$, we conclude that $T_\lambda^\epsilon$ doesn't contain a singularity of any $\mu_i$. 

 Let $Q\subset \R$ be the set of all singularities of the $\mu_i$ for $i=1, \ldots k$.  For some  $t_0\notin Q$,  there exists  a neighborhood $V$ of $t_0$, such that  all branches of  the $\mu_i$  are single-valued functions on $V$.

Now, we only have to  show that for a single-valued holomorphic function $f$ (here $f$ is any of the branch of the $\mu_i$) which is defined on a compact neighborhood $V$ of some real number $t_0$, for  $\lambda\in \R$ a.e., there exist  only finite number of $t\in V\cap \R$ such that $f(t)\in i\R-\lambda$.

We assume that, for  $\lambda_0 \in \R$ there exists an infinite number of $t \in V \cap \R$ such that $f(t)+\lambda_0 \in i\R$. Setting $f(t)+\lambda_0=f_1(t)+if_2(t)$ for some  $f_1$ and $f_2$ real-valued and analytic functions, assuming the existence of infinitely many $t$   is equivalent to state that the equation $f_1(t)=0$ has an infinite many zeroes on $\R\cap V$. But, if so, by  the analytic continuation property  of $f_1$, we get  that $f_1(t)\equiv 0$ on  $\R\cap V$. So, we get $f(t)+\lambda_0=i f_2(t)$ for every $t \in V \cap \R$. Now, by choosing $\lambda\neq \lambda_0 $, we get that for every $t \in V\cap \R$ $f(t)+\lambda\notin i \R$. 

By this argument it follows that such a $\lambda_0$ should be isolated and in particular, for a.e. $\lambda \in \R$ we get only a finite number of $t \in V \cap \R$ such that $f(t)+\lambda\in i\R$. 

Since for $\lambda\in \R$ a.e., $T_\lambda^\epsilon$ is compact (being a closed subset of a compact set) and $T_\lambda^\epsilon\cap Q=\emptyset$ by the previous discussion, then there exists a finite covering by  compact sets $V_i\cap \R$ such that the branches of each $\mu_i$ are single-valued holomorphic functions on the $V_i$. Then, we conclude that, for $\lambda\in \R$ a.e., $T_\lambda^\epsilon$ is a  finite set for each $V_i$ and being in a finite number, we get that  $T_\lambda^\epsilon$ is a finite set for a.e. $\lambda \in \R$ and so $S_\lambda^\epsilon$. The conclusion follows by using the compactness of the unit interval.
\end{proof}

\begin{rem}
It is worth noticing  that Lemma \ref{lm:analytic_path}  still holds for a piecewise real-analytic path.
\end{rem}

%%%%%%%%%%%%%%%%%%%%%%%%%%%%%%%%%%%%%%%%%%%%%%%%%%%%%%%%%%%%%%%%%
%%
%%
%%
%%
%%
%%
%%
%%%%%%%%%%%%%%%%%%%%%%%%%%%%%%%%%%%%%%%%%%%%%%%%%%%%%%%%%%%%%%%%%

\section{A new spectral flow formula}

As already observed the proof of a generalized Morse index theorem in the non
selfadjoint situation is through a sort of a spectral flow formula. One of the major step is to define the spectral flow in this more general situation and this is the focus of this section.

%%%%%%%%%%%%%%%%%%%%%%%%%%%%%%%%%%%%%%%%%%%%%%%%%%%%%%%%%%%%%%%%%
%%
%%
%%
%%
%%
%%
%%
%%%%%%%%%%%%%%%%%%%%%%%%%%%%%%%%%%%%%%%%%%%%%%%%%%%%%%%%%%%%%%%%%

\subsection{The spectral flow for closed non-selfadjoint operators}

We start with the definition of the admissible class of operators that we will deal  with. 
\begin{defn}
Let $A \in \Cl(H)$ be a selfadjoint trace class resolvent operator. The operator $L \in \Cl(H)$ is termed {\bf admissible} if it is of the form 
\[
L=A+C
\]	
for some $C \in \Lin(H)$ and the set of all admissible operators on $H$ will be denoted by $\AAA(H)$.
\end{defn}
	
\subsubsection*{Analytic case} We start defining the generalized spectral flow for analytic admissible paths. So, let 
\[
L:[0,1] \to \AAA(H) \quad \textrm{ defined by} \quad L_t:=A+ C_t
\]
for $t \mapsto C_t$ being a real analytic path in $\Lin(H)$.  
\begin{defn}
We define the {\bf spectral flow} 	of the real analytic path $L$ as follows 
\begin{equation}\label{eq:spectral-flow-case-1}
	\spfl(L_t, t \in [0,1])=\dfrac{1}{2\pi i}\int_{\partial \Omega} \Tr d(L_z+\lambda I)(L_z+\lambda I)^{-1},
\end{equation}
where $\lambda$ is a sufficiently small positive real number and where 
\[
L_z= L_t+ is \Id \quad \textrm{ for } z=t+is \in \Omega \quad \textrm{ and }\quad \Omega:=[0,1]\times [-M, M]
\]
where $M$ has been defined  in Lemma~\ref{lm:spectrum_trace_class}.
\end{defn}
In the next result we prove that the spectral flow of an admissible analytic path is well-defined and is an integer.
\begin{lem}\label{lem:spectral-integer}
Under the above assumptions the  spectral flow of $L$ is well-defined. Moreover
\[
\spfl(L_t, t \in [0,1]) \in \Z.
\]
\end{lem}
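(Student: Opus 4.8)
The plan is to make a generic small positive choice of the shift parameter $\lambda$ and then obtain both the convergence of the integral in \eqref{eq:spectral-flow-case-1} and its integrality directly from Proposition~\ref{thm:lemma-4}. First I fix the admissible range of $\lambda$. Since $A$ is a selfadjoint trace-class-resolvent operator it has compact resolvent, hence a discrete real spectrum with only finitely many negative eigenvalues; likewise $L_0=A+C_0$ and $L_1=A+C_1$, being bounded perturbations of $A$, have compact resolvent, so $\sigma(L_0)$ and $\sigma(L_1)$ are discrete and meet the bounded set $\{|\Real\,\zeta|<1\}$ in finitely many points. Hence there is $\lambda_0>0$ so that for every $\lambda\in(0,\lambda_0)$ one has simultaneously $-\lambda\in\rho(A)$ and $-\lambda\notin\Real\,\sigma(L_0)\cup\Real\,\sigma(L_1)$. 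For such a $\lambda$ and every $z=t+is\in\Omega$ the operator $L_z+\lambda I=A+\bigl(C_t+(\lambda+is)I\bigr)$ is a trace-class-resolvent operator by Lemma~\ref{lm:spectrum_trace_class}(i); it is invertible at every $z\in\partial\Omega$ — on the horizontal edges $s=\pm M$ because $\|C_t\|<M$ confines $\sigma(L_t)$ to the strip $\{|\Im\,\zeta|<M\}$, and on the vertical edges $t=0,1$ by the choice of $\lambda$; and, writing $(L_z+\lambda I)^{-1}=\bigl(I+(A+\lambda I)^{-1}(C_t+isI)\bigr)^{-1}(A+\lambda I)^{-1}$ with the fixed trace-class factor $(A+\lambda I)^{-1}$ and using $\|TK\|_{\Lin_1}\le\|T\|\,\|K\|_{\Lin_1}$, the map $z\mapsto(L_z+\lambda I)^{-1}$ is continuous from $\partial\Omega$ into $\Lin_1(H)$ (exactly as in the proof of Lemma~\ref{lm:spectrum_trace_class}(ii)). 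Consequently $\Tr\,d(L_z+\lambda I)(L_z+\lambda I)^{-1}=\Tr\!\bigl[C_t'\,(L_z+\lambda I)^{-1}\bigr]\,dt+i\,\Tr\!\bigl[(L_z+\lambda I)^{-1}\bigr]\,ds$ is a continuous trace-class-operator-valued $1$-form along the piecewise-smooth closed curve $\partial\Omega$, so the line integral in \eqref{eq:spectral-flow-case-1} converges.

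Integrality then follows immediately from Proposition~\ref{thm:lemma-4}, applied with $R\=A+\lambda I$ — which is closed and invertible with bounded inverse, and satisfies $R^{-1}=(A+\lambda I)^{-1}\in\Lin_1(H)$ because $A$ is a trace-class-resolvent operator — and with the given path $t\mapsto C_t$, which is $\mathscr C^1$ (indeed real-analytic) and for which $A_z=R+C_t+isI$ coincides with $L_z+\lambda I$, invertible on $\partial\Omega$ by the first step. The proposition gives
\[
\spfl(L_t,\,t\in[0,1])=\frac{1}{2\pi i}\int_{\partial\Omega}\Tr\,d(L_z+\lambda I)(L_z+\lambda I)^{-1}=l-m\in\Z,
\]
where $l$ (resp.\ $m$) counts, with multiplicity, the eigenvalues of $t\mapsto L_t$ that cross the line $\{\Real\,\zeta=-\lambda\}$ from left to right (resp.\ from right to left) as $t$ runs from $0$ to $1$.

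It remains to see that this integer does not depend on the auxiliary choices. Independence of $M$ is immediate, since the right-hand side $l-m$ furnished by Proposition~\ref{thm:lemma-4} is intrinsic to the path $t\mapsto L_t$ (the net number of crossings of the fixed line $\{\Real\,\zeta=-\lambda\}$) and does not involve $M$; one may also see this by noting that the above $1$-form is closed off the singular set — which uses only cyclicity of the trace on products of a trace-class and a bounded operator, together with the fact that the singular set lies in $[0,1]\times(-\max_t\|C_t\|,\max_t\|C_t\|)$, hence in the interior of every admissible $\Omega$ — and then applying Stokes' theorem on the region between the rectangles attached to two admissible bounds $M_1<M_2$. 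Independence of $\lambda$ follows because the crossing count stabilizes as $\lambda\to0^+$: only finitely many eigenvalue branches of $t\mapsto L_t$ ever come within distance $1$ of the imaginary axis, each such branch $\mu_i$ has $\Real\,\mu_i(0)$ and $\Real\,\mu_i(1)$ equal to zero or bounded away from zero, and — by the homotopy and Brouwer-degree computation carried out in the proof of Proposition~\ref{thm:lemma-4} — $l-m$ depends only on the signs of $\Real\,\mu_i(0)+\lambda$ and $\Real\,\mu_i(1)+\lambda$; for $\lambda$ below a positive threshold this sign pattern (a vanishing real part counting as positive, since $\lambda>0$) is constant, so $l-m$ is the same for all sufficiently small admissible $\lambda>0$. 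This is exactly what the small positive shift achieves: it makes the spectral flow insensitive to a possible non-invertibility of the endpoints $L_0,L_1$, at the usual cost of a $(-,+]$-type convention there.

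I expect the genuinely delicate points to be the two in the third paragraph — showing that the relevant thresholds are strictly positive, and that the boundary sign pattern, hence $l-m$, is constant for all sufficiently small $\lambda>0$ — together with the routine but necessary verification of the hypotheses of Proposition~\ref{thm:lemma-4}. Note that the approach above uses only the $\mathscr C^1$-regularity already required by Proposition~\ref{thm:lemma-4}; in this approach, real-analyticity of $t\mapsto C_t$ is not needed for well-definedness or integrality, and is invoked (through Lemma~\ref{lm:analytic_path}) only for finer structural statements such as the finiteness of the singular set.
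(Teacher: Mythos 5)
Your argument is correct in substance, but it takes a genuinely different route from the paper's. The paper first treats hyperbolic ends, then uses the real-analyticity of $t\mapsto C_t$ through Lemma~\ref{lm:analytic_path} to pick a small shift $\lambda$ for which the singular set $S_\lambda$ is \emph{finite}, splits $\Omega$ into small rectangles around the finitely many isolated singularities (using closedness of the one-form off $S_\lambda$), and applies Proposition~\ref{thm:lemma-4} locally; well-definedness in $\lambda$ is then obtained by the soft argument that $\eta\mapsto\spfl(L_t+\eta\Id)$ is continuous and integer-valued, hence locally constant on $(0,\epsilon)$. You instead verify the hypotheses of Proposition~\ref{thm:lemma-4} once, on the whole rectangle, with $R=A+\lambda\Id$ for a generic small $\lambda$ (discreteness of $\sigma(L_0),\sigma(L_1)$ near the critical strip making the vertical edges non-singular, $\|C_t\|<M$ handling the horizontal ones), and you get integrality in one stroke; you then argue $\lambda$-independence by stability of the endpoint sign pattern inside the degree computation, and you add an $M$-independence check the paper leaves implicit. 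What your route buys is economy and the observation that, \emph{if Proposition~\ref{thm:lemma-4} is taken verbatim for $\mathscr C^1$ paths}, analyticity plays no role at this stage; what the paper's route buys is that Proposition~\ref{thm:lemma-4} is only ever invoked near isolated singular points, where the constant-rank family of spectral projections used in its proof is unproblematic, which is precisely why Lemma~\ref{lm:analytic_path} is called upon here (and why the continuous case is later defined by analytic interpolation).

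Two caveats you should repair. First, the assertion that $A$ has ``only finitely many negative eigenvalues'' is not implied by the trace-class-resolvent hypothesis (the spectrum may be unbounded below); all you need, and all that is true in general, is that $\sigma(A)\cap[-1,0]$ and the eigenvalues of $L_0,L_1$ with real part in $[-1,0]$ form finite sets, by discreteness of the spectra in a bounded region, and this already yields your interval $(0,\lambda_0)$ of admissible shifts. Second, your $\lambda$-independence argument appeals to the internal homotopy/degree mechanism of the proof of Proposition~\ref{thm:lemma-4} (existence and endpoint behaviour of eigenvalue branches for a merely $\mathscr C^1$ family), which is exactly the delicate point the paper circumvents; the simpler and safer argument is the paper's: for $\eta$ in a small interval of admissible shifts the integral depends continuously on $\eta$ and is integer-valued, hence constant. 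With these adjustments your proof is a valid, and arguably leaner, alternative to the one in the paper.
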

\begin{proof}
We split the proof into two steps. \\
{\bf First step.}
{\em We prove the result under hyperbolic ends, namely 
\[
\sigma(L_0)\cap i\R=\emptyset \textrm{ and  } \sigma(L_1)\cap i\R =\emptyset.
\]
}
We start to observe that, being the hyperbolicity an open condition, there exists $\epsilon>0$ such that $\sigma(L_0+\lambda \Id)\cap i\R =\sigma(L_1+\lambda \Id)\cap i\R =\emptyset$ for every $\lambda \in (0,\epsilon)$. Then the spectral flow \eqref{eq:spectral-flow-case-1} is well defined. Moreover, the path of operator-valued one-form $\lambda \mapsto d(L_z+\lambda I)(L_z+\lambda I)^{-1}$ is $\mathscr C^1$ in $\Lin_1(H)$. Since the trace of such a   path  is bounded by  the trace norm, then the trace $\Tr d(L_z+\lambda I)(L_z+\lambda I)^{-1}$ is continuous with respect to $\lambda$. This fact, in particular, implies that the spectral flow defined by \eqref{eq:spectral-flow-case-1} is continuous with respect to $\lambda.$
	
Next we prove, by using Proposition~\ref{thm:lemma-4},  that this number is actually  an integer. Now, by invoking Lemma~\ref{lm:analytic_path}, we can choose $\lambda\in (0,\epsilon)$ such that the cardinality of the singular set $S_\lambda$ is finite and we let $S_\lambda=\{z_1,z_2,\cdots,z_n\}$ where $z_j=t_j+is_j$. We assume that the real part of  $k$ (with $k\leq n$) of such $z_j$ are  distinct and let us denote them by $t_1, t_2, \cdots, t_k$.   Then for every $j=1, \ldots, k$ we can choose sufficiently small $\epsilon_j>0$ to guarantee the existence of $c_j>0$ such that $\sigma(A_z)\cap\{\pm c_j+is\ | \ s\in\R   \}=\emptyset$ for every $z=t+is\in\Omega_j\=[t_j-\epsilon_j,t_j+\epsilon_j]\times [-M,M]$. By invoking Lemma~\ref{thm:importante} we know that $\Tr d(L_z+\lambda\Id)(L_z+\lambda \Id)^{-1}$ is a closed operator-valued one form on $\Omega\backslash(\underset{1\leq j\leq k}{\bigcup}\Omega_j)$. Moreover   $\Tr d(L_z+\lambda \Id)(L_z+\lambda \Id)^{-1}$ is well-defined on $\partial\Omega_j$. So, we get 
\[
\spfl(L_t,t\in[0,1])=\sum_{1\le j\le k} \dfrac{1}{2\pi i}\int_{\partial_{\Omega_j}} \Tr d(L_z+\lambda\Id)(L_z+\lambda \Id)^{-1}.
\] 
Since, by Proposition~\ref{thm:lemma-4} each of the  term $\displaystyle \frac{1}{2\pi i}\int_{\partial_{\Omega_j}} \Tr d(L_z+\lambda\Id)(L_z+\lambda \Id)^{-1}$ is an integer, we get that  $\spfl(L_t,t\in[0,1]) \in \Z$. \\

{\bf Second step.} We assume that the ends are possibly not-hyperbolic; i.e. $\sigma(L_0)\cap i\R$ and $\sigma(L_1)\cap i\R$ are possibly not-empty. 

If so, we can find $\epsilon>0$ such that  for $\eta\in (0,\epsilon)$
\[
\sigma(L_0+\eta\Id)\cap i\R=\emptyset \qquad \textrm{ and  }\qquad \sigma(L_1+\eta\Id)\cap i\R =\emptyset.
\]
Then, by using the first step $\spfl(L_t+\eta \Id, t \in [0,1])$ is continuous w.r.t.  $\eta\in (0,\epsilon)$ and it is integer-valued. In particular,  it is locally constant on $(0,\epsilon)$. So,  the  spectral flow as defined in Equation~ \eqref{eq:spectral-flow-case-1} is independent on $\lambda$. This concludes the proof.
\end{proof}

Before extending the definition of the spectral flow to  continuous paths, we  prove that the the spectral flows for two analytic paths with the same endpoints both coincide.
 \begin{lem}\label{lm:well-define-spec}
 	For $i=1,2$, we assume that the paths $t\mapsto C_i(t)$ parameterized on the unit interval  are real analytic and that   $C_1(0)=C_2(0)$ and $C_1(1)=C_2(1)$. Then, we have 
 	\begin{equation}
 	\spfl(A+C_1(t), t \in [0,1])=\spfl(A+C_2(t), t \in [0,1]).	
 	\end{equation}
 \end{lem}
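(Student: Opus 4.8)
The plan is to embed the two given paths in the affine homotopy $C^\sigma(t)\=(1-\sigma)C_1(t)+\sigma C_2(t)$, $L^\sigma_t\=A+C^\sigma(t)$, $(\sigma,t)\in[0,1]^2$, the point of this choice being that, since $C_1(0)=C_2(0)\eq c_0$ and $C_1(1)=C_2(1)\eq c_1$, every path $L^\sigma$ shares the endpoints $A+c_0$ and $A+c_1$. First I would fix once and for all an $M$ larger than $\sup_{\sigma,t}\norm{C^\sigma(t)}\le\max\big(\sup_t\norm{C_1(t)},\sup_t\norm{C_2(t)}\big)$, so that Lemma~\ref{lm:spectrum_trace_class} applies uniformly in $\sigma$, set $\Omega\=[0,1]\times[-M,M]$, and choose one positive real $\lambda$ that is admissible in the definition of $\spfl$ for both $L_1$ and $L_2$ — each path allows a whole interval of such values, and by the preceding lemma (that the spectral flow is a well-defined integer) the resulting integer does not depend on the choice — and which, after discarding a further countable set of exceptional values, is small enough that $A+c_0+(is+\lambda)\Id$ and $A+c_1+(is+\lambda)\Id$ are invertible for every $s\in[-M,M]$. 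With $A^\sigma_z\=L^\sigma_t+(is+\lambda)\Id$ for $z=t+is$, these choices make the operator-valued one-form $\omega^\sigma\=\Tr\,\dd A^\sigma_z\,(A^\sigma_z)^{-1}$ well defined on all of $\partial\Omega$ and give $\spfl(A+C_i(t),\,t\in[0,1])=\frac1{2\pi i}\int_{\partial\Omega}\omega^i$ for $i=1,2$.

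Next I would divide out a fixed reference operator to turn the operator-valued integrand into an honest scalar. Put $B_0\=A+c_0+(iM+\lambda)\Id$; its numerical range lies in $\set{w\in\C}{\Im w\ge M-\norm{c_0}>0}$, so $B_0$ is invertible and $B_0^{-1}\in\Lin_1(H)$ by Lemma~\ref{lm:spectrum_trace_class}. Setting $\Phi^\sigma_z\=A^\sigma_z B_0^{-1}=\Id+\big[(C^\sigma(t)-c_0)+i(s-M)\Id\big]B_0^{-1}$ one obtains a family, $\mathscr C^1$ in $(t,s)$ and continuous in $\sigma$, of operators of the form $\Id+(\text{trace class})$, invertible on $\partial\Omega$; since $B_0$ is constant, $\omega^\sigma=\Tr\,\dd\Phi^\sigma_z\,(\Phi^\sigma_z)^{-1}$ there, and the identity $\Tr\,\dd N\,N^{-1}=\dd\log\det N$ for such families (cf.\ Lemma~\ref{lm:trace_det}) yields
\[
\spfl\big(A+C_i(t),\,t\in[0,1]\big)=\frac1{2\pi i}\int_{\partial\Omega}\dd\log\det\Phi^i_z=\mathrm{wind}\big(z\mapsto\det\Phi^i_z;\ \partial\Omega,\ 0\big),\qquad i=1,2.
\]
Thus the lemma is reduced to showing that the two loops $z\mapsto\det\Phi^1_z$ and $z\mapsto\det\Phi^2_z$ on $\partial\Omega$ have the same winding number about the origin.

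For this the natural candidate homotopy is $(\sigma,z)\mapsto\det\Phi^\sigma_z$, which is jointly continuous and joins the two loops, so all that is needed is that it never vanishes on $[0,1]\times\partial\Omega$, i.e.\ that $A^\sigma_z$ is invertible there. This is exactly where the coincidence of endpoints enters: on the vertical edges $t\in\{0,1\}$ one has $A^\sigma_z=A+c_0+(is+\lambda)\Id$ (resp.\ with $c_1$), which is independent of $\sigma$ and invertible by the choice of $\lambda$; on the horizontal edges $s=\pm M$ the numerical range of $A^\sigma_z$ is contained in $\set{w\in\C}{\pm\Im w\ge M-\sup_{\sigma,t}\norm{C^\sigma(t)}>0}$, so $A^\sigma_z$ is invertible there as well. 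Hence $\det\Phi^\sigma_z\ne0$ throughout $[0,1]\times\partial\Omega$, the winding number is independent of $\sigma$, and therefore $\spfl(A+C_1(t),\,t\in[0,1])=\spfl(A+C_2(t),\,t\in[0,1])$.

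The delicate step — and the only place where the hypothesis $C_1(j)=C_2(j)$, $j\in\{0,1\}$, is genuinely exploited — is precisely this non-vanishing of the connecting homotopy of Fredholm determinants on $\partial\Omega$; everything else (the bookkeeping of $M$ and $\lambda$, the determinant identity, the numerical-range estimates) is routine. I would expect the only subtlety to be making the parameters $M$ and $\lambda$ work simultaneously for both paths, which is harmless since there are just two of them. One could equally well avoid the reference operator $B_0$ and argue edge by edge: the pullbacks of $\omega^1$ and $\omega^2$ to the two vertical edges coincide, since there the operator family and its $s$-derivative do, while on each horizontal edge $\int\omega^i$ equals the increment of $\log$ along the path $t\mapsto\det\big(\Id+(C_i(t)-c_0)B_0^{\pm1}\big)$ in $\C\setminus\{0\}$, whose two endpoints are the same for $i=1$ and $i=2$ and which the homotopy $C^\sigma$ connects rel endpoints — giving the same conclusion.
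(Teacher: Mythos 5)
Your proposal is correct and rests on exactly the same idea as the paper's proof: the affine two-parameter family $A+(1-\sigma)C_2(t)+\sigma C_1(t)$, whose fixed endpoints (from $C_1(j)=C_2(j)$, $j=0,1$) make the integer-valued invariant constant in $\sigma$. The only difference is that the paper simply asserts continuity of $\spfl$ in the homotopy parameter and stops, whereas you substantiate that claim by rewriting the spectral flow as the winding number of a Fredholm determinant and checking non-vanishing on $\partial\Omega$ (vertical edges via the common endpoints, horizontal edges via the bound $M$) --- a useful filling-in of the step the paper leaves implicit, not a different route.
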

 \begin{proof}
 	Let us consider the following  two-parameter family 
 	\[
 	L_{s,t}=A+ sC_1(t)+(1-s)C_2(t) \qquad \textrm{ for } s,t \in [0,1].
 	\]
 	Then, $\spfl(L_{s,t},t\in [0,1])$ is continuous w.r.t.$s$. Being an integer-valued function, it is locally constant.  This concludes the proof. 
 	\end{proof}
 	
 	\subsubsection*{Continuous case}
Now we are ready to define the spectral flow to continuous path.
\begin{defn}\label{def:spectral-flow-continuous}
Let $L:[0,1] \to \AAA(H)$ be an admissible continuous path. We define the {\sc spectral flow} of $L$  as 
\begin{equation}\label{eq:spectral-flow-continuous}
\spfl(L_t, t\in [0,1])\=\spfl(M_t, t \in [0,1]),
\end{equation}
where $M:[0,1] \to \AAA(H)$ is an analytic admissible path having the same endpoints of $L$.
\end{defn}
\begin{rem}
A constructive way for choosing an analytic path associated to $L$, is to consider the $n$-th Bernstein polynomial defined by: 
\[
L_n(t)=\sum_{1\le k\le n} \binom{n}{k} t^k(1-t)^{n-k}L\left(\dfrac{k}{n}\right).
\]
%It is well-known that $(D_n)_n$ converges uniformly to $D$. Since  
%Since, by Lemma \ref{lm:well-define-spec} the spectral flow is independent on the  analytic path $t\mapsto C_t$ having the same endpoints, the conclusion readily follows.  
\end{rem}
\begin{lem}
Under above assumptions, the spectral flow $\spfl(A+D_t, t\in [0,1])$ in \eqref{eq:spectral-flow-continuous} is well-defined, namely, it is independent on the choice of $C_t$.
\end{lem}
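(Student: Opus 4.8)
The plan is to derive the statement directly from Lemma~\ref{lm:well-define-spec}, which already shows that the spectral flow of an \emph{analytic} admissible path is a function of its endpoints alone. Only two points then remain: that there is at least one analytic admissible path joining the prescribed endpoints $D_0$ and $D_1$ (so that the right-hand side of \eqref{eq:spectral-flow-continuous} is not vacuous), and that any two such paths produce the same integer.

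For the first point I would take the affine interpolant $M_t\=A+(1-t)D_0+tD_1$. The $\Lin(H)$-valued map $t\mapsto (1-t)D_0+tD_1=D_0+t(D_1-D_0)$ is polynomial in $t$, hence real analytic; it is uniformly bounded by $\max\{\|D_0\|,\|D_1\|\}$; and it coincides with $D_0$ at $t=0$ and with $D_1$ at $t=1$. Hence $t\mapsto M_t$ is an analytic admissible path with the same endpoints as $L$. (The Bernstein polynomial $L_n$ of the Remark is an equally good choice, being polynomial in $t$ and bounded by $\sup_{t\in[0,1]}\|D_t\|<\infty$ by continuity on a compact interval.) In particular the constant $M>0$ of Lemma~\ref{lm:spectrum_trace_class} entering \eqref{eq:spectral-flow-case-1} may be fixed once and for all, large enough to dominate the norms of the perturbation parts of every path occurring in the argument.

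For the second point, let $M^{(1)}_t=A+C^{(1)}_t$ and $M^{(2)}_t=A+C^{(2)}_t$ be any two analytic admissible paths both having the endpoints of $L$. Then $C^{(1)}_0=D_0=C^{(2)}_0$ and $C^{(1)}_1=D_1=C^{(2)}_1$, so Lemma~\ref{lm:well-define-spec} applies with $C_1=C^{(1)}$ and $C_2=C^{(2)}$ and gives
\[
\spfl(A+C^{(1)}_t,\,t\in[0,1])=\spfl(A+C^{(2)}_t,\,t\in[0,1]),
\]
so the value assigned by \eqref{eq:spectral-flow-continuous} is independent of the chosen analytic representative, which is exactly the assertion. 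I do not foresee a genuine obstacle here, since the substance is carried by Lemma~\ref{lm:well-define-spec}; the only point requiring a little care is the bookkeeping internal to that lemma, namely that the homotopy $L_{s,t}=A+sC^{(1)}_t+(1-s)C^{(2)}_t$ stays admissible and, for each fixed $s$, analytic in $t$ (it is, being for each $s$ a real-analytic $\Lin(H)$-valued path perturbing $A$ by a bounded operator), and that the single constant $M$ fixed above works uniformly along the whole homotopy thanks to the norm bound recorded in the previous step.
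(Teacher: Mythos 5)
Your proposal is correct and follows essentially the same route as the paper, whose proof consists precisely of invoking Lemma~\ref{lm:well-define-spec} to compare any two analytic representatives with the given endpoints. Your extra check that such a representative exists (via the affine interpolant or the Bernstein polynomials) is the same point the paper covers in the remark preceding the lemma, so nothing is genuinely different.
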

\begin{proof}
The proof readily follows by invoking Lemma~\ref{lm:well-define-spec}.
%	In fact, for any two real analytic path  $C_1(t)$ and $C_2(t)$ such that $D_0=C_1(0)=C_2(0), D_1=C_1(1)=C_2(1)$, by Lemma \ref{lm:well-define-spec} we have 
%	\begin{equation}
%	\spfl(A+C_1(t), t \in [0,1])=\spfl(A+C_2(t), t \in [0,1]).	
%	\end{equation}
%	Therefore,  the spectral flow $\spfl(A+D_t, t\in [0,1])$ is  independent on the choice of $C_t$.
\end{proof}

%%%%%%%%%%%%%%%%%%%%%%%%%%%%%%%%%%%%%%%%%%%%%%%%%%%%%%%%%%%%%%%%%
%%
%%
%%
%%
%%
%%
%%
%%%%%%%%%%%%%%%%%%%%%%%%%%%%%%%%%%%%%%%%%%%%%%%%%%%%%%%%%%%%%%%%%

\subsection{Spectral flow for admissible paths and Morse index}

A classical and useful property of the spectral flow in the case of paths of bounded essentially positive Fredholm operators is that it gives a measure of the difference of the Morse indices at the ends. As we will prove in this section, this important property still holds in this more general situation. 

Let $T$ be a  compact resolvent hyperbolic operator, namely, its resolvent is compact and $\sigma(L)\cap i\R=\emptyset$. We assume that there exists $K>0$ such that 
\[
\sigma(T)\cap \set{z}{\Real z <-K}=\emptyset. 
\]
So, $T$ has only a finite number of  eigenvalues counted with multiplicity having negative real part. More precisely, by the spectral property of compact operators, there exists a  sufficiently small $\delta>0$ such that all eigenvalues of $T$ with negative real part are located in 
 $D\=[-K,-\delta]\times [-M,M]$. In particular, $\Imm P^-(T)$ is finite dimensional where 
 \begin{equation}\label{eq:projection-operator}
P^-(T)\=\dfrac{1}{2\pi i}\int_{\partial D} (z\Id-T)^{-1} dz.
\end{equation}
%It is worth observing  that since $T$ has compact resolvent  the total eigenspace  associated to  each eigenvalue is finite dimensional. So,  By this discussion we are entitle to introduce the following definition. 
Like in the selfadjoint case, we introduce the following. 
\begin{defn}
	Let $T$ be  a compact resolvent hyperbolic operator such that  
	\[
	\sigma(T)\cap \set{z}{\Real z <-K}=\emptyset
	\]
	for some $K>0$. We define its {\bf Morse index} as follows:
	\begin{equation}
	m^-(T)\=\dim \Imm\, P^-(T)
	\end{equation}
	where $P^-(T)$ is defined in \eqref{eq:projection-operator}.
\end{defn}
As in the selfadjoint case, also  in  this case, the spectral flow of a path parameterized by a compact interval measures the change of the Morse index at the endpoints. 
\begin{thm}\label{thm:spectral-morse}
Let $L:[0,1] \to \AAA(H)$ be a continuous and admissible path and we assume that there is $K>0$ such that for every  $t\in [0,1]$, the operator  $L_t$ has no eigenvalues on the open half-plane  $\set{z}{\Real z <-K}$. 	 Then, we have
\[
\spfl(L_t,t\in [0,1])=\iMorse{L_0}-\iMorse{L_1}.
\]
\end{thm}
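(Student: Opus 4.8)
The plan is to reduce the statement to an analytic representative of the path, to express the spectral flow as a signed count of eigenvalue crossings of a vertical line via Proposition~\ref{thm:lemma-4}, and to match that count with the two endpoint Morse indices. Concretely, I would first pass to the analytic case: by definition $\spfl(L_t,t\in[0,1])$ equals the spectral flow of any analytic admissible path with the same endpoints, so I take the Bernstein polynomial path $L_n$ of the remark above with $n$ large. Then $t\mapsto L_n(t)-A=\sum_{k=0}^{n}\binom{n}{k}t^k(1-t)^{n-k}(L(k/n)-A)$ is a polynomial, hence real analytic, path in $\Lin(H)$; $L_n$ shares the endpoints of $L$; and $\sup_{t\in[0,1]}\|L_n(t)-L_t\|$ is as small as we please. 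Since every operator occurring has trace class, hence compact, resolvent, the spectrum moves continuously under such norm-small perturbations, so for $n$ large one has $\sigma(L_n(t))\cap\{\Real z<-K'\}=\emptyset$ for all $t$, with $K':=K+1$; together with Lemma~\ref{lm:spectrum_trace_class}, which confines all eigenvalues to $\{|\Im z|<M\}$, and relabelling $L_n$ as $L$, I may assume $L$ analytic with $\sigma(L_t)\subset\{z\in\C:\Real z\ge-K',\ |\Im z|<M\}$ for every $t$. In particular $\iMorse{L_0}$ and $\iMorse{L_1}$ are finite, each being the number, with multiplicity, of eigenvalues in a fixed compact rectangle $[-K',-\delta]\times[-M,M]$.

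Next I would fix the shift. Using that $\sigma(L_0)$ and $\sigma(L_1)$ are discrete, I choose $\lambda\in(0,\delta)$ small enough that neither $L_0$ nor $L_1$ has an eigenvalue with $0>\Real z>-\lambda$, and that $-\lambda$ is not the real part of any eigenvalue of $L_0$ or of $L_1$ (and, to apply Proposition~\ref{thm:lemma-4} cleanly, also $-\lambda\notin\sigma(A)$). With this choice the operators $L_0+\lambda\Id$ and $L_1+\lambda\Id$ are hyperbolic; $\iMorse{L_j}$ equals the number of eigenvalues of $L_j$ with $\Real z<-\lambda$; and the path $t\mapsto L_t+\lambda\Id$ satisfies the hypotheses of Proposition~\ref{thm:lemma-4} on $\Omega=[0,1]\times[-M,M]$, boundary invertibility holding on the horizontal edges because all eigenvalues have $|\Im z|<M$ and on the vertical edges because the endpoints are hyperbolic. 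Proposition~\ref{thm:lemma-4} then gives $\spfl(L_t,t\in[0,1])=l-m$, where $l$ (resp. $m$) is the net number, with multiplicity, of eigenvalues of $L_t$ crossing the line $\{\Real z=-\lambda\}$ from left to right (resp. from right to left) as $t$ runs over $[0,1]$.

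It then remains to identify $l-m$ with $\iMorse{L_0}-\iMorse{L_1}$. I would let $N_t$ be the rank of the Riesz projector of $L_t$ associated with the rectangle $D=[-K',-\lambda]\times[-M,M]$, that is, the number of eigenvalues of $L_t$ with $\Real z<-\lambda$. By the uniform bound, for no $t$ does an eigenvalue of $L_t$ lie on the left edge $\{\Real z=-K'\}$ or on the horizontal edges $\{|\Im z|=M\}$ of $D$, so the only way $N_t$ can vary is through eigenvalues crossing the right edge $\{\Real z=-\lambda\}$. Invoking the analytic perturbation theory of Kato exactly as in the proof of Proposition~\ref{thm:lemma-4} (the eigenvalues that ever approach $\{\Real z=-\lambda\}$ forming, by the confinement above, a finite system of possibly branching analytic eigenvalue functions), each left-to-right crossing lowers $N_t$ and each right-to-left crossing raises it by the corresponding multiplicity, whence $N_1-N_0=-(l-m)$. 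By the choice of $\lambda$ one has $N_0=\iMorse{L_0}$ and $N_1=\iMorse{L_1}$, and therefore
\[
\spfl(L_t,t\in[0,1])=l-m=N_0-N_1=\iMorse{L_0}-\iMorse{L_1}.
\]

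I expect the real obstacle to be this last identification, i.e.\ converting the global contour integral of Proposition~\ref{thm:lemma-4}, a winding number, into the statement about the two endpoint Morse indices. This is precisely where the hypothesis $\sigma(L_t)\cap\{\Real z<-K\}=\emptyset$ is indispensable: without it eigenvalues could run off to $\Real z=-\infty$, the crossing eigenvalue functions would fail to form a finite system, the projector onto $D$ would not be of finite rank, and neither $\iMorse{L_0}$ nor $\iMorse{L_1}$ would be defined in the first place. A secondary, more routine complication is to guarantee that the analytic representative used to define the spectral flow inherits a uniform spectral bound, and that one small $\lambda$ simultaneously hyperbolizes both endpoints, leaves the Morse indices unchanged, and fits the hypotheses of Proposition~\ref{thm:lemma-4}.
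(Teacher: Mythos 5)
Your proposal is correct and follows essentially the same route as the paper: reduce to a real-analytic representative, invoke Proposition~\ref{thm:lemma-4} to express the spectral flow as the signed count $l-m$ of eigenvalues crossing a vertical line, and then observe that, thanks to the hypothesis $\sigma(L_t)\cap\{\Real z<-K\}=\emptyset$ (plus the strip confinement from Lemma~\ref{lm:spectrum_trace_class}), eigenvalues can only enter or leave the left region through that line, so the Morse index changes by exactly $-(l-m)$. The only difference is one of bookkeeping: you make explicit the small shift $\lambda$ hyperbolizing the endpoints and the rank of the Riesz projector over the rectangle $[-K',-\lambda]\times[-M,M]$, details the paper's proof leaves implicit in its ``decompose into subregions with one crossing instant'' argument.
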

\begin{proof}
Without leading in generalities, it is enough  to prove the theorem  in the real analytic case. By using Lemma~\ref{lm:analytic_path} we assume that the set of crossing instants is finite; namely: 
\[
\Gamma:=\Set{t\in[0,1]|\sigma (L_t)\cap i\R \neq \emptyset}=\Set{t_1, \ldots, t_n}.
\]
By the definition of the  spectral flow  and by taking into account of the finite additivity of the integral, we decompose  the region into subregions like in the first step of proof of Lemma~\ref{lem:spectral-integer} and then we add all of them together. 
In this way, we reduce to prove the theorem in  the case in which we have only one crossing instant $t_0$.  By invoking Proposition~\ref{thm:lemma-4} once again, we get that 
\[
\spfl(L_t,t\in [0,1])= l-m
\]
where $l$ (resp. $m$) is the number of eigenvalues crossing from the left (resp. right) to the right (resp. left). So, by this description we get that 
\[
\iMorse{L_1}=\iMorse{L_0}-l+m.
\]
Putting these last two equalities together, the thesis readily follows.
\end{proof}

%%%%%%%%%%%%%%%%%%%%%%%%%%%%%%%%%%%%%%%%%%%%%%%%%%%%%%%%%%%%%%%%%
%%
%%
%%
%%
%%
%%
%%
%%%%%%%%%%%%%%%%%%%%%%%%%%%%%%%%%%%%%%%%%%%%%%%%%%%%%%%%%%%%%%%%%

\subsection{A spectral flow formula}

In this section we will give a generalized Morse index theorem for second order differential operators with non-constant coefficients.

We denote by  $ \Mat(N, \R)$ be the set of $N\times N$ real matrices, by $\SSS(N,\R)$ the $N\times N$ real symmetric matrices. We let $P \in \mathscr C^1\big([0,1],\SSS(N)\big)$, $ Q\in \mathscr C^1\big([0,1], \Mat(N)\big)$, $ S\in \mathscr C^1\big([0,1], \SSS(N)\big)$ and we  assume that  $P(x)$ is non degenerate for each $x\in [0,1]$. These data defines  the  linear second order differential operator $\mathscr A$  as follows
\begin{equation}
	\mathscr A\=-\dfrac{d}{dx}\left[P(x) \dfrac{d}{dx}+ Q(x)\right]+ \trasp{Q}(x)\dfrac{d}{dx} +S(x),  \qquad x \in [0,1].
\end{equation}
Let now consider the continuous path $t\mapsto C_t$ in $\Mat(N)$ and for $z=t+is$ we let   $C_z=C_t+is \Id$ and we denote by $\mathscr C_t$ and $\mathscr C_z$ the corresponding multiplication operators and by $\mathscr A_t, \mathscr A_z$ the  associated differential operators. 
 
The corresponding differential equation  $\mathscr A_z u=0$ is  given by 
\begin{equation}\label{eq:morse_sturm}
-\dfrac{d}{dx}\big[P(x) u'(x) + Q(x) u(x)\big]+ \trasp{Q}(x) u'(x)+S(x) u(x)+C_z(x) u(x)=0, \qquad x \in [0,1].
\end{equation}
By setting   $v(x)\= P(x)u'(x)+ Q(x) u(x)$ and  $w(x)\equiv \trasp{\big(v(x), u(x)\big)}$,  Equation~\eqref{eq:morse_sturm} fits into the following linear first order system (in general not Hamiltonian as soon as $x\mapsto S(x)+C_z(x)$ is not a symmetric path)
\begin{equation}\label{eq:Hamiltonian-system}
	w'(x)= J B_z(x) w(x), \qquad x \in [0,1]
\end{equation}
where $\displaystyle J\= \begin{bmatrix}
	0 & -\Id\\ \Id & 0
\end{bmatrix}$
and
\begin{equation}
	B_z(x)\= \begin{bmatrix}
		P^{-1}(x) & - P^{-1}(x) Q(x)\\ - \trasp{Q}(x) P^{-1}(x) &  \trasp{Q}(x) P^{-1}(x) Q(x) - S(x)-C_z(x)
	\end{bmatrix}.
\end{equation}
\begin{defn}\label{def:bc}
	For $i=0,1$ we set  $ R_i\in \Mat(2N,\R)$. We define the  {\bf boundary operator $\mathcal R$} as follows
	\begin{equation}\label{eq:bc}
	\mathcal R(u)\=  R_0 \begin{bmatrix}
	P(0) u'(0)+ Q(0)\\
	u(0)
	\end{bmatrix}
	+ R_1 \begin{bmatrix}
	P(1) u'(1)+ Q(1)\\
	u(1)
	\end{bmatrix}
	\end{equation}
	where $'$ denotes the derivative with respect to $x$.
\end{defn}

\begin{ex}  Below  there are some choices of $R_0$ and $R_1$  corresponding to the Dirichlet, Neumann and periodic boundary conditions. 
	\begin{itemize}
		\item  Dirichlet case corresponds to choose
		\[
		R_0:=\begin{bmatrix}
		0 & I_n\\
		0 & 0
		\end{bmatrix}, \qquad  R_1:=\begin{bmatrix}
		0 & 0\\
		0 & I_n
		\end{bmatrix}
		\]
		\item Neumann  case corresponds to choose
		\[
		R_0:=\begin{bmatrix}
		I_n & 0\\
		0 & 0
		\end{bmatrix}, \qquad  R_1:=\begin{bmatrix}
		0 & 0\\
		I_n & 0
		\end{bmatrix}
		\]
		\item Periodic boundary conditions corresponds to choose
		\[
		R_0= \begin{bmatrix}
		\Id & 0 \\
		0 & \Id
		\end{bmatrix}, \qquad  R_1= - R_0.
		\]
	\end{itemize}
\end{ex}

Let 
\begin{equation}
	R_z\= R_0 +R_1 \psi_z(1), \qquad z=t+is \in \Omega
\end{equation}
where $\psi_z$ is the fundamental solution of the system given in Equation~\eqref{eq:Hamiltonian-system} and $\Omega:=[0,1]\times [-M,M]$ defined above. Then we introduce the {\bf determinant map\/} $\rho$ as follows:
	\[
	\rho: \Omega \ni z  \longmapsto \rho(z) :=  \det\,  R_z\in  \C.
	\]
For each $t \in [0,1]$, we denote by $\mathcal A$ (resp. $\mathcal A_t$) the operator $\mathscr A$  (resp. $\mathcal A_t$) acting on the domain 
\[
\displaystyle  \mathcal D\=\Set{u \in H^2([0,1], \R^N)| \mathcal R u=0}
\]
and we consider the complex extension operator $\mathcal A$ (resp. $\mathcal A_t$) by acting on $\mathcal C^\infty([0,1], \C^N)$. With a slight abuse of notation we will not distinguish between $\mathcal D$ and its complexification as well as between $\mathcal A$ (resp. $\mathcal A_t$) and its complex extension. 
%Letting 
%\begin{equation}
%C_z(x)\=C_t(x) + is\, \Id \quad \textrm{ for } \quad z=t+is \in \Omega
%\end{equation}
%and for every $z \in  \Omega$ we 
Denoting by $\mathcal C_t$ and $\mathcal C_z$ the corresponding multiplication operators on $\mathcal D$ and we define  $\mathcal A_z : \mathcal D \subset L^2([0,1], \C^N) \to L^2([0,1], \C^N)$ to be the closed unbounded operator on $\mathcal D$ pointwise defined by
\begin{equation}\label{eq:Sturm-Liouville-equation}
\big( \mathcal A_z u\big)(x)\= \big(\mathcal A u  + \mathcal  C_zu)(x).
\end{equation}	
	
	\begin{defn}\label{def:incriminata}
		Under above notations, if $0\notin\rho(\partial\Omega)$ we define the {\bf degree-index} associated to the pair $(\psi, R)$ as the integer  $\icon(\psi, R)$  given  by
		\begin{equation}\label{eq:HPW-index}
		\icon (\psi, R) \= \deg ( \rho , \Omega , 0)
		\end{equation}
		and we define the {\em spectral index\/} of $\mathcal A$  as  the integer  $\ispec(\mathcal A)$  given by 
		\begin{equation}\label{eq:spectral-HPW-index}
		\ispec (\mathcal A) \= \spfl(\mathcal A_t, t \in [0,1]).
		\end{equation}
	\end{defn}
\begin{rem}
Definition~\ref{def:incriminata} is well-posed. This follows by  recalling that the according to  Definition~\ref{def:spectral-flow-continuous}	the spectral flow is well-defined for any continuous path in $\AAA(H)$. Since, it is a standard fact  that the operator $\mathcal A$ is selfadjoint trace class resolvent, we get that the path $t \mapsto\mathcal A_t$ is admissible. For further details, we refer the interested reader to  \cite[Lemma $3.8$]{PW20}. 
\end{rem}

The next result which is a generalization of the classical spectral flow formulas in the non-selfadjoint setting, provides an equality between the spectral index defined in terms of the spectral flow and the degree-index which encodes the properties of the solutions space of a differential operator and represents, in this framework the symplectic counterpart of the Maslov index. 
\begin{thm}\label{thm:spectral-degree}
	Under notation above and by assuming that $\ker \mathcal A_0= \ker \mathcal A_1 = \{0\}$, the following equality holds: 
	\begin{equation}
		\ispec(\mathcal A)= \icon(\psi, R).
	\end{equation}
\end{thm}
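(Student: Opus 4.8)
The plan is to reduce the equality $\ispec(\mathcal A) = \icon(\psi, R)$ to an application of Proposition~\ref{thm:lemma-4} combined with the identification of the determinant map $\rho$ with a trace of an operator-valued one-form, mimicking the strategy indicated in the introduction. First I would note that $\mathcal A$ is (after the complexification) an admissible operator in the sense of the earlier definition: writing $\mathcal A_z = \mathcal A + \mathcal C_z$ where $\mathcal A$ itself is of the form $A + \mathcal C_0$ with $A$ selfadjoint having compact (hence trace class, in the $1$D setting — the resolvent of a regular Sturm–Liouville operator is an integral operator with continuous kernel, so it lies in $\Lin_1$) resolvent. Thus $\spfl(\mathcal A_t, t\in[0,1])$ is well-defined and, by the results of Section~2, given by the boundary integral $\frac{1}{2\pi i}\int_{\partial\Omega} \Tr\, d(\mathcal A_z + \lambda I)(\mathcal A_z + \lambda I)^{-1}$ for small $\lambda>0$. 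Since $\ker \mathcal A_0 = \ker \mathcal A_1 = \{0\}$, the path has hyperbolic-type behaviour at the endpoints in the sense needed, and by choosing $\lambda$ small the perturbation does not affect the count.

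The core step is the identity
\[
\frac{1}{2\pi i}\int_{\partial\Omega} \Tr\, d(\mathcal A_z)(\mathcal A_z)^{-1} \;=\; \frac{1}{2\pi i}\int_{\partial\Omega} d\log \det R_z \;=\; \deg(\rho, \Omega, 0).
\]
To establish the left equality I would use the standard fact that for a Sturm–Liouville-type boundary value problem, the nonhomogeneous resolvent $(\mathcal A_z)^{-1}$ is an integral operator whose Green's kernel is built from the fundamental solution $\psi_z$ of \eqref{eq:Hamiltonian-system} and the inverse $R_z^{-1}$ of the boundary matrix; in particular $z \mapsto \det R_z = \rho(z)$ vanishes exactly when $\ker \mathcal A_z \neq \{0\}$, i.e. at the singular instants. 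Differentiating the Green kernel in $z$ and taking the fiber trace, one obtains that $\Tr\, d\mathcal A_z\,\mathcal A_z^{-1}$ differs from $d\log\det R_z = d\log\rho(z)$ by an exact one-form (the contribution of the "interior" part of the Green function, which is holomorphic in $z$ on all of $\Omega$ and contributes nothing to the contour integral). This is the non-selfadjoint analogue of the reduction carried out in the proof of Proposition~\ref{thm:lemma-4} via the factorization $A_z = M_z N_z$; indeed I would phrase it by applying Lemma~\ref{lm:trace_det} and the cited trace-of-logarithmic-derivative lemmas directly to the finite-rank boundary data. Then $\frac{1}{2\pi i}\int_{\partial\Omega} d\log\det R_z$ is by definition the Brouwer degree $\deg(\det R_z, \Omega, 0) = \deg(\rho,\Omega,0) = \icon(\psi,R)$, using that $0 \notin \rho(\partial\Omega)$, which holds precisely because $\ker\mathcal A_z = \{0\}$ on $\partial\Omega$ (guaranteed at $t=0,1$ by hypothesis and at $|s| = M$ by the choice of $M$ in Lemma~\ref{lm:spectrum_trace_class}).

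Putting the two halves together: $\ispec(\mathcal A) = \spfl(\mathcal A_t, t\in[0,1])$ equals the boundary integral of $\Tr\, d\mathcal A_z\, \mathcal A_z^{-1}$, which by the Green-kernel computation equals $\deg(\rho,\Omega,0) = \icon(\psi,R)$. I expect the main obstacle to be the precise bookkeeping in the Green-kernel computation — showing rigorously that the fiberwise trace of $d\mathcal A_z\,\mathcal A_z^{-1}$ splits as $d\log\det R_z$ plus an exact (indeed holomorphic-in-$z$) remainder, which requires care because $\mathcal A_z$ is an unbounded differential operator and the trace is of an infinite-rank operator; the honest way is to regularize via the $\lambda$-shift already present in the definition of the spectral flow and to exploit that the $z$-dependence of the kernel enters only through $\psi_z(1)$ and hence only through $R_z$, so that the "interior" derivative terms assemble into something holomorphic on $\Omega$. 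A secondary technical point is verifying that the complex extension of $\mathcal A$ to $\mathcal C^\infty([0,1],\C^N)$ with the given domain $\mathcal D$ still falls in the class $\AAA(H)$ so that the machinery of Section~2 applies verbatim; this should follow from the regularity hypotheses on $P, Q, S$ and the fact that $C_z$ is a bounded multiplication operator.
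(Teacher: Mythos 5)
Your overall strategy matches what the paper actually does, which is very little: the paper's entire proof is a one-line appeal to \cite[Theorem 3.12]{PW20}, asserting that the selfadjoint argument carries over verbatim. Your proposal is essentially a reconstruction of that argument: express $\ispec(\mathcal A)=\spfl(\mathcal A_t)$ as the contour integral of the trace one-form $\Tr \dd\mathcal A_z\,\mathcal A_z^{-1}$ over $\partial\Omega$, and identify this with $\frac{1}{2\pi i}\int_{\partial\Omega}\dd\log\det R_z=\deg(\rho,\Omega,0)$. The difference is in how the key identity is obtained. You route it through the Green kernel of $(\mathcal A_z)^{-1}$ built from $\psi_z$ and $R_z^{-1}$, arguing the ``interior'' contribution is an exact remainder; the quoted proof instead goes through the finite-dimensional reduction of the Appendix (Lemmas~\ref{thm:importante} and~\ref{lm:trace_det}), together with the fact that degeneracy of $\mathcal A_z$ is equivalent to $\det R_z=0$ (the analogue of \cite[Lemma 3.3]{PW20}), i.e.\ a Hill-determinant-type identity localized at the singular points. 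Your Green-kernel route is a legitimate variant and arguably more explicit, but note one phrasing issue: the remainder cannot be ``holomorphic in $z$'' since $\mathcal C_z=\mathcal C_t+is\Id$ is not holomorphic in $z=t+is$; what you need (and all you need) is that the remainder is a closed/exact one-form on all of $\Omega$, so it integrates to zero over $\partial\Omega$.

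There is one concrete gap in your argument, and it is exactly a point where ``verbatim from the selfadjoint case'' is not automatic. You claim $0\notin\rho(\partial\Omega)$ because $\ker\mathcal A_z=\{0\}$ on $\partial\Omega$, ``guaranteed at $t=0,1$ by hypothesis and at $|s|=M$ by the choice of $M$.'' The horizontal edges $|s|=M$ are indeed covered by Lemma~\ref{lm:spectrum_trace_class}, but on the vertical edges the hypothesis $\ker\mathcal A_0=\ker\mathcal A_1=\{0\}$ only controls $s=0$: for a non-selfadjoint $\mathcal A_0$ the operator $\mathcal A_0+is\Id$ can be degenerate for some $s\neq 0$, since purely imaginary eigenvalues are not excluded (in the selfadjoint setting of \cite{PW20} the spectrum is real, so invertibility for $s\neq0$ is free — this is precisely what your sentence is implicitly borrowing). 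To repair this you must either add the nondegeneracy of $\mathcal A_0+is\Id$ and $\mathcal A_1+is\Id$ for $s\in[-M,M]$ as part of the standing assumption (which is in effect what the definition of $\icon$ already presupposes through the requirement $0\notin\rho(\partial\Omega)$), or run the whole argument for the $\lambda$-shifted family $\mathcal A_z+\lambda\Id$ with $\lambda$ chosen via Lemma~\ref{lm:analytic_path} so that the singular set avoids $\partial\Omega$, and then pass to the limit $\lambda\to 0^+$ using local constancy of both integer-valued quantities. With that repaired, and with the trace-class verification for the $1$D resolvent stated via Weyl asymptotics rather than continuity of the kernel (a continuous kernel alone does not give trace class), your outline is a faithful account of the proof the paper delegates to \cite{PW20}.
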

\begin{proof}
The proof of this result follows repetendum ad verbatim the proof of  \cite[Theorem 3.12]{PW20} provided in the for selfadjoint case. 
\end{proof}

%%%%%%%%%%%%%%%%%%%%%%%%%%%%%%%%%%%%%%%%%%%%%%%%%%%%%%%%%%%%%%%%%
%%
%%
%%
%%
%%
%%
%%
%%%%%%%%%%%%%%%%%%%%%%%%%%%%%%%%%%%%%%%%%%%%%%%%%%%%%%%%%%%%%%%%%

\section{A non-selfadjoint Morse index theorem}

The aim of this section is to prove in this non-selfadjoint framework, the Morse index theorem under Dirichlet boundary conditions (which in the variational case correspond to the classical Bolza problem). 

In this section, if not otherwise stated we assume that the path  of symmetric matrices $x \mapsto P(x)$ (which pointwise represents the principal symbol of the differential operator) is positive definite. 

For each $l \in (0,1]$, we define the domain 
\[
  \mathcal D_{Dir}(l)\=\Set{u \in W^{2,2}([0,l], \R^N)|  u(0)=0=u(l)}= W^{2,2}([0,l], \R^N)\cap W_0^{1,2}([0,l], \R^N)
\]
Let $\mathcal A_z(l):\mathcal D(l)\subset L^2([0,l],\R^n)\to L^2([0,l],\R^n)$ be the closed unbounded operator pointwise having densely defined on $\mathcal D(l)$ and given by 
\[
	\mathcal A_z(l)u:=-\dfrac{d}{dx}\big[P(x) u' + Q(x) u\big]+ \trasp{Q}(x) u'+S(x) u+C_z(x) u,
\]
where $C_z(x)= C_0(x)+ t K\Id +is\Id $ with constant $K>0$ large enough and $z=t+is$. 
We assume that $x\mapsto C_0(x)$ is uniformly bounded by a constant $M$. 

Let $\psi_z(l)=\begin{bmatrix}E_z(l) & F_z(l)\\ G_z(l)& H_z(l)\end{bmatrix}$ be the fundamental solution of the associated first order system; then, for Dirichlet boundary condition we have
\begin{equation}
R_z=\begin{bmatrix}0& I_n \\0& 0\end{bmatrix} +\begin{bmatrix}
0&0\\0& I_n  
\end{bmatrix}\psi_z=\begin{bmatrix}
0& I_n\\ G_z& H_z
\end{bmatrix}.
\end{equation}
In particular the determinant map, reduces to  $\rho(z,l)=(-1)^n\det G_z(l)$.
\begin{thm}\label{thm:Morse_index_theorem}
	For $l \in (0,1]$, let  $\mathcal A_0(l):\mathcal D_{Dir}(l)\subset L^2([0,l],\R^n)\to L^2([0,l],\R^n)$ be the closed unbounded operator pointwise given by 
\[
	\mathcal A_0(l)u:=-\dfrac{d}{dx}\big[P(x) u' + Q(x) u\big]+ \trasp{Q}(x) u'+S(x) u+C_0(x) u. 
\]
	Then, we have 
	\begin{equation}
	\iMorse{\mathcal A_0(l)}=\deg ( \rho(0,l) , \Omega , 0)=\dfrac{1}{2\pi i}\int_{\partial V} \dd\log \det(G_{is}(x)),
	\end{equation}
	where $V=  [-M, M]\times [\delta, l]$ for some sufficiently small $\delta>0$.
\end{thm}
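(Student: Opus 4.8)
The plan is to realize $\iMorse{\mathcal A_0(l)}$ as a spectral flow over a deformation path and then convert that spectral flow into a degree via the trace formula established in Section~1. First I would set up an auxiliary one-parameter family: for $t\in[0,1]$ put $\mathcal A_t(l) u := -\frac{d}{dx}[P u' + Q u] + \trasp{Q} u' + S u + (C_0 + tK\Id) u$ on $\mathcal D_{Dir}(l)$, with $K>0$ chosen large enough that $\mathcal A_1(l)$ is hyperbolic with no eigenvalues in the half-plane $\{\Real z < 0\}$, hence $\iMorse{\mathcal A_1(l)}=0$. Since $C_0$ is uniformly bounded by $M$ and adding $tK\Id$ shifts the whole spectrum to the right, such a $K$ exists; I would also note $\ker\mathcal A_1(l)=\{0\}$. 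By Theorem~\ref{thm:spectral-morse} applied to $t\mapsto\mathcal A_t(l)$ (all operators have no spectrum in $\{\Real z<-K'\}$ for a common $K'$, being compact-resolvent perturbations of a fixed selfadjoint operator),
\[
\iMorse{\mathcal A_0(l)} = \iMorse{\mathcal A_0(l)} - \iMorse{\mathcal A_1(l)} = \spfl(\mathcal A_t(l),\, t\in[0,1]) = \ispec(\mathcal A(l)).
\]

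Next I would invoke Theorem~\ref{thm:spectral-degree} (the abstract spectral flow formula), which gives $\ispec(\mathcal A(l)) = \icon(\psi, R) = \deg(\rho(\cdot,l),\Omega,0)$, provided the endpoint kernels vanish; $\ker\mathcal A_1(l)=\{0\}$ by the hyperbolicity above, and $\ker\mathcal A_0(l)=\{0\}$ is exactly the assumption that $l$ is not a conjugate instant (one may restrict to such $l$, or note it holds for $l$ small and persists generically). This already yields the first equality $\iMorse{\mathcal A_0(l)} = \deg(\rho(0,l),\Omega,0)$. For the second equality I would use the explicit form of the boundary matrix in the Dirichlet case computed just before the statement, $R_z = \begin{bmatrix} 0 & I_n \\ G_z & H_z \end{bmatrix}$, so that $\rho(z,l) = (-1)^n \det G_z(l)$; the factor $(-1)^n$ is a nonzero constant and does not affect the degree. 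Writing the degree as the winding number of $\rho(\cdot,l)$ around $0$ along $\partial\Omega$ and using that $\Omega = [0,1]\times[-M,M]$ in the $(t,s)$-variables maps, under the homotopy collapsing the $t$-direction (the $t$-dependence is through $tK\Id$, which only translates eigenvalues and contributes no winding on the hyperbolic ends), to the contour $\partial V$ with $V = [-M,M]\times[\delta,l]$ parametrized by $(s,x)$ with $z = is$, I would conclude
\[
\deg(\rho(0,l),\Omega,0) = \frac{1}{2\pi i}\int_{\partial V} \dd\log\det\bigl(G_{is}(x)\bigr).
\]
Here $\delta>0$ is taken small enough that there is no conjugate instant in $(0,\delta]$ (true for $\delta$ small since $G_{0}(x)\to 0$ invertibly as $x\to 0^+$ after normalization, equivalently the differential operator on a tiny interval is positive).

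The main obstacle I expect is the careful bookkeeping in the last step: justifying that the two-dimensional degree $\deg(\rho(0,l),\Omega,0)$ in the $(t,s)$-parameters equals the contour integral of $\dd\log\det G_{is}(x)$ over $\partial V$ in the $(s,x)$-parameters. This requires identifying the zero set of $\rho$ correctly — its zeros in $\Omega$ correspond to $s=0$ together with a conjugate instant, exactly as in the selfadjoint picture of \cite{PW20} — and then running the same homotopy argument used in Proposition~\ref{thm:lemma-4} to reduce the winding number to a sum of $\pm1$ contributions indexed by conjugate points in $(\delta,l]$, each counted with the sign of the crossing. The analytic-perturbation lemma (Lemma~\ref{lm:analytic_path}) guarantees one may assume these crossings are finite and transverse, and the remark following it allows piecewise-analytic $C_0$; the rest is the localization/excision property of the Brouwer degree together with the identification of $\rho(z,l)$ with $(-1)^n\det G_z(l)$. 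I would also need to check that $\rho(z,l)\neq 0$ on $\partial\Omega$ (equivalently $0\notin\rho(\partial\Omega)$), which follows from hyperbolicity of the $t=1$ end, triviality of the kernel at $t=0$, and $|\Im\lambda|>M$ controlling the $s=\pm M$ edges as in Lemma~\ref{lm:spectrum_trace_class}.
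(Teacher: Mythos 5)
Your first two steps coincide with the paper's: perturb by $tK\Id$ to kill the Morse index at $t=1$, apply Theorem~\ref{thm:spectral-morse} to get $\iMorse{\mathcal A_0(l)}=\spfl(\mathcal A_t(l))$, then Theorem~\ref{thm:spectral-degree} to get $\deg(\rho(\cdot,l),\Omega,0)=\frac{1}{2\pi i}\int_{\partial\Omega}\dd\log\det G_z(l)$ (and, as in the paper, the positivity of the real parts of the spectrum for large $K$ should be justified by the quadratic-form estimate $2\,\Real\lambda\,\|u\|^2\ge 2(\lambda_1(l)-\|\mathcal S+\mathcal C_0\|)\|u\|^2$ with $\lambda_1(l)\to+\infty$ as $l\to0$, not just by ``shifting the spectrum'').

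The genuine gap is in your last step, the passage from the degree in the $(t,s)$-variables at fixed $x=l$ to the contour integral in the $(s,x)$-variables at fixed $t=0$. First, your identification of the zero set (``zeros in $\Omega$ correspond to $s=0$ together with a conjugate instant, exactly as in the selfadjoint picture'') is false here: $\rho(t+is,l)=0$ means $-is\in\sigma(\mathcal A_t(l))$, and a non-selfadjoint operator can perfectly well have nonzero purely imaginary eigenvalues, so zeros need not sit on $\{s=0\}$ and need not be conjugate instants; the reduction of the winding number to $\pm1$ contributions at conjugate points is only verified later, in the special reaction--diffusion computation. Second, ``collapsing the $t$-direction by a homotopy'' does not exchange a degree computed on the slice $\{x=l\}$ of the three-dimensional parameter space $(t,s,x)$ for a contour integral on the slice $\{t=0\}$; these are different two-dimensional cycles and the paper needs a genuinely three-dimensional argument. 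The paper writes $\iMorse{\mathcal A_0(l)}=\iMorse{\mathcal A_0(l)}-\iMorse{\mathcal A_0(\delta)}$ (using Step~1 to know $\iMorse{\mathcal A_0(\delta)}=0$, which is a statement about \emph{all} eigenvalues having positive real part on small intervals, strictly stronger than your ``no conjugate instant in $(0,\delta]$''), interprets the difference as the integrals of the closed form $\dd\log\det G_{t+is}(x)$ over the two faces $x=l$ and $x=\delta$ of the box $[0,1]\times[-M,M]\times[\delta,l]$, and then shows by closedness that this difference equals the sum over the four remaining faces, of which the $t=1$ face vanishes by Step~1 and the $s=\pm M$ faces vanish by Lemma~\ref{lm:spectrum_trace_class} combined with \cite[Lemma 3.3]{PW20}; what survives is exactly the $t=0$ face, i.e.\ $\frac{1}{2\pi i}\int_{\partial V}\dd\log\det G_{is}(x)$. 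Without this box argument (or an equivalent substitute) and without the uniform small-interval spectral estimate giving $\iMorse{\mathcal A_0(\delta)}=0$, your sketch does not establish the second equality of the theorem.
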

\begin{proof}
	We will give the proof by three steps.
	
{\bf First Step.} {\em We  prove that there exists $K>0$ large enough such that all eigenvalues of $\mathcal A_{1+is}(l)$ have positive  real part such that  $\iMorse{\mathcal A_{1+is}(l)}=0$ for every $s\in[-M,M]$ and $l\in(0,1]$.}

Let $u_l$ be an eigenvector of $\mathcal A_0(l)$ with eigenvalue $\lambda_l$.
We have
\[
(\mathcal A_0(l) \bar u_l,\bar u_l)+(\mathcal A_0(l) u_l,  u_l)=2\,\Real\,\lambda_l \|u_l\|_{L^2}^2 .
\] 
Let $\widetilde{\mathcal A}(l)$ be the self-adjoint operator in $L^2$ having domain $\mathcal D_{Dir}(l)$ and defined by 
\[
\widetilde{\mathcal A}u := -\dfrac{d}{dx}\big[P(x) u' + Q(x) u\big]+ \trasp{Q}(x) u'.
\]
We have 
\begin{equation}
\begin{aligned}
2\,\Real\,\lambda_l \|u_l\|_{L^2}^2&=(\widetilde{\mathcal  A}(l)  u,u)+(\widetilde{\mathcal  A}(l) \bar u, \bar u)+((\mathcal S+\mathcal C_0) \bar u,\bar u)+((\mathcal S+\mathcal C_0) u,u)\\
&\geq2 (\lambda_1(l)-\norm{\mathcal S+\mathcal C_0})\norm{u_l}^2_{L^2},
\end{aligned}
\end{equation}
where $\lambda_1(l)$ is the minimal eigenvalue of $\widetilde{\mathcal  A}(l)$ and where $\mathcal S, \mathcal C_0$ are the multiplication operator on $\mathcal D_{Dir}$ induced by $S, C_0$ respectively.

By the results in \cite[Section $6.2$]{CH53} we have $\lambda_1(l)$ tends to $+\infty$ as $l\to 0$. So $\Real\, \lambda_l$ is bounded from below uniformly on $(0,1]$. Moreover, there exists $\delta>0$ small enough such that all eigenvalues of $\mathcal A_0(x)$ have positive and arbitrarily large real part  for every $x\in (0,\delta)$. Consequently, we get $\iMorse{\mathcal A_0(\delta)}=0$. By choosing  $K>0$ large enough so that all eigenvalues of $\mathcal A_{1+is}(l)$ have real part with a uniform positive lower bound, we get that $\iMorse{\mathcal A_{1+is}(l)}=0$ for every $s\in[-M,M]$ and $l\in(0,1]$. 

{\bf Second Step.} {\em We prove that 
\[
\iMorse{\mathcal A_0(l)}=\dfrac{1}{2\pi i}\int_{\partial \Omega} \dd \log \det (G_z(l)).
\]
}

In fact, by Theorem \ref{thm:spectral-morse} we have 
\begin{equation}
\spfl( \mathcal A_t(l), t \in [0,1])=\iMorse{\mathcal A_0(l)}.
\end{equation}
Denote $\Omega=[0,1]\times[-M,M]$, then by Theorem \ref{thm:spectral-degree} there holds
\begin{equation}\label{eq:morse-index-A_0}
\iMorse{\mathcal A_0(l)}=\deg ( \rho_0(l) , \Omega , 0)=\deg(\det G_z(l), \Omega, 0)=\dfrac{1}{2\pi i}\int_{\partial \Omega} \dd \log \det (G_z(l))
\end{equation}
for every $l\in(0,1]$.

{\bf Third Step.} {\em We prove that 
\[
\dfrac{1}{2\pi i}\int_{\partial \Omega} \dd \log \det (G_z(l))=\dfrac{1}{2\pi i}\int_{\partial V} \dd\log \det(G_{is}(x))
\]
where $V= [\delta, l]\times [-M, M]$.}

% The idea is to take $l$ as a variable (will be denoted by $x$) and to transform the computations with $s,x$ as variables.

By the first step, we know that there exists a sufficiently small positive real number $\delta$  such that $\iMorse{A_0(\delta)}=0$. Then for every $l>\delta$ by using Equation~\eqref{eq:morse-index-A_0}, we get  
\begin{equation}\label{eq:difference-morse-index}
\begin{aligned}
\iMorse{A_0(l)}&=\iMorse{A_0(l)}-\iMorse{A_0(\delta)}\\
&=\dfrac{1}{2\pi i}\int_{\partial \Omega} \dd \log \det(G_z(l))-\dfrac{1}{2\pi i}\int_{\partial \Omega}\dd\log\det(G_z(\delta)).
\end{aligned}
\end{equation}
The function  $\R^3\ni (t,s,x)\mapsto \det(G_{t+is}(x))\in \C$ is a  differentiable function on $\R^3$. So, the one-form $\dd \log \det(G_{t+is}(x))$ is a well-defined on the open set $\R^3\backslash \set{(z, x)}{\det(G_{t+is}(x))=0}$.
Let $W_i$ for $i=1, \ldots, 6$ the six faces of the parallelepiped defined by:
\begin{align*}
&	W_1=\set{(t,s,x)}{t=0,s\in [-M,M],x\in [\delta,l]},\\
&	W_2=\set{(t,s,x)}{t=1,s\in [-M,M],x\in [\delta,l]},\\
&	W_3=\set{(t,s,x)}{t\in [0,1], s=-M,x\in [\delta,l]},\\
&	W_4=\set{(t,s,x)}{t\in [0,1], s=M,x\in [\delta,l]},\\
&  W_5=\set{(t,s,x)}{t\in [0,1], s\in [-M,M], x=\delta},\\
&  W_6=\set{(t,s,x)}{t\in [0,1], s\in [-M,M],x=l}.\\
\end{align*}

By Equation \eqref{eq:difference-morse-index} and by taking into account the orientation, we get 
\begin{equation}\label{eq:morse-index-four-term}
\begin{aligned}
\iMorse{\mathcal A_0(l)}&=\dfrac{1}{2\pi i}\int_{\partial W_6} \dd \log \det(G_z(x))-\dfrac{1}{2\pi i}\int_{\partial W_5} \dd\log\det(G_z(x))\\
&=\dfrac{1}{2\pi i}\int_{\partial W_1} \dd\log\det(G_z(x))-\dfrac{1}{2\pi i}\int_{\partial W_2} \dd\log\det(G_z(x))\\
&+\dfrac{1}{2\pi i}\int_{\partial W_3} \dd\log\det(G_z(x))-\dfrac{1}{2\pi i}\int_{\partial W_4} \dd\log\det(G_z(x)).
\end{aligned}
\end{equation}
In order to conclude, we need to compute 
\[
\dfrac{1}{2\pi i}\int_{\partial W_i} \dd\log\det(G_z(x)) \quad \textrm{ for } i=1,2,3,4.
\]
By invoking the first step, the operator $\mathcal A_{1+is}(x)$ only has eigenvalues with positive real part; in particular,  $\mathcal A_{1+is}(x)$ is nondegenerate on $\partial W_2$. By \cite[Lemma 3.3]{PW20} $\det(G_{1+is}(x))$ is not zero in $\partial W_2$ and hence 
\[
\dfrac{1}{2\pi i}\int_{\partial W_2} \dd\log\det(G_z(x))=0.
\]
Since $\|C_0\|<M$, then the operator $\mathcal A_{t+is}(x)$ is nondegenerate for on $\partial W_3$ and on  $\partial W_4$ by Lemma~\ref{lm:spectrum_trace_class} and invoking once again \cite[Lemma 3.3]{PW20}, we get that   $\det G_{t+is}(x)$ is not zero in both $\partial W_3$ and $\partial W_4$; hence
\[
\dfrac{1}{2\pi i}\int_{\partial W_3} \dd\log\det(G_z(x))=\dfrac{1}{2\pi i}\int_{\partial W_4} \dd\log\det(G_z(x))=0.
\]
Summing up, we get that 
\[
\iMorse{\mathcal A_0(l)}=\dfrac{1}{2\pi i}\int_{\partial W_1} \dd\log\det(G_z(x)).
\]
The conclusion follows by observing that $\partial W_1$ coincides with the set $V:=[-M,M]\times [\delta, l]$  and $\det G_z(x)$ on $W_1$ reduces to $\det G_{is}(x)$. This concludes the proof. 
\end{proof}

%%%%%%%%%%%%%%%%%%%%%%%%%%%%%%%%%%%%%%%%%%%%%%%%%%%%%%%%%%%%%%%%%
%%
%%
%%
%%
%%
%%
%%
%%%%%%%%%%%%%%%%%%%%%%%%%%%%%%%%%%%%%%%%%%%%%%%%%%%%%%%%%%%%%%%%%

\subsection{An explicit computation in the  constant case}

In this section we assume that  

\[
P=I_n, \qquad Q=0, \qquad  S+C_z =L+is\Id_n \quad \textrm{ for } \quad L\in \C^{n\times n}.
\]  
Then $JB_z=\begin{bmatrix}0 & L+is \Id_n \\ \Id_n & 0\end{bmatrix}$. So, the fundamental matrix solution is given by 
\begin{align}
\psi_z(x)&=\exp(x JB_z )=\sum_{k=0}^{+\infty}\dfrac{x^k}{k!} \begin{bmatrix}0 & L+is \Id_n \\ I_n & 0\end{bmatrix}^k  \\
&=\sum_{k=0}^{+\infty} \dfrac{x^{2k}}{(2k)!} \begin{bmatrix}(L+is\Id_n)^{k}&0\\ 0&(L+isI_n)^{k}\end{bmatrix}
+\dfrac{x^{2k+1}}{(2k+1)!}\begin{bmatrix}0 & (L+is \Id_n)^{k+1} \\  (L+is \Id_n)^{k} & 0\end{bmatrix}.
\end{align}
By this, we immediately get that the down-left block is given by 
\[
G_{is}(x)=\sum_{k=0}^{+\infty}  \dfrac{x^{2k+1}}{(2k+1)!}(L+is \Id_n)^{k}.
\]
If $L=\lambda \Id_n +N $ and  $N$ is a nilpotent  matrix and $\lambda\in \R$,  we have 
\[
G_{is}(x)=\sum_{k=0}^{+\infty}  \dfrac{x^{2k+1}}{(2k+1)!}(\lambda I_n +is I_n +N )^{k}=\sum_{k=0}^{+\infty}  \dfrac{x^{2k+1}}{(2k+1)!}(\lambda+is)^{k}I_n + N_2,
\]
where $N_2$ is a new nilpotent matrix obtained by collecting all the remaining terms of the binomial.  By this computation it follows that  $\det G_{is}(x)$ is  the same  after replacing  $\lambda \Id_n +N$ by $\lambda \Id_n$\footnote{This can be seen, for instance by using the Jordan normal form decomposition of $G_{is}(x)$.}. This implies that 
\[
\iMorse{-\dfrac{d^2}{dx^2}+\lambda \Id }=\iMorse{-\dfrac{d^2}{dx^2}+\lambda \Id +\mathcal  N }
\] 
where $\mathcal N$ is the multiplication operator induced by a nilpotent matrix.
By using Jordan decomposition, we can assume that if $L=L_s+L_n$ where $L_s$ is semi-simple and $L_n$ is nilpotent, then
\[
\iMorse{-\dfrac{d^2}{dx^2}+L}=\iMorse{-\dfrac{d^2}{dx^2}+L_s}.
\]
If  $P$  is a constant and positive definite matrix, by a  similar calculation, we get
\begin{equation}\label{eq:P-positive-matrix}
G_{is}(x)= \sum_{k=0}^{+\infty}  \dfrac{x^{2k+1}}{(2k+1)!}(P^{-1}L+is P^{-1})^k P^{-1}.
\end{equation}
Thus, we get 
\begin{equation}\label{eq:P-pisitive-deg}
\deg\det(G_{is}(x))=\deg\det \left(\sum_{k=0}^{+\infty}  \dfrac{x^{2k+1}}{(2k+1)!}(P^{-1}L+is P^{-1})^k\right)
\end{equation}

%%%%%%%%%%%%%%%%%%%%%%%%%%%%%%%%%%%%%%%%%%%%%%%%%%%%%%%%%%%%%%%%%
%%
%%
%%
%%
%%
%%
%%
%%%%%%%%%%%%%%%%%%%%%%%%%%%%%%%%%%%%%%%%%%%%%%%%%%%%%%%%%%%%%%%%%

\subsubsection*{A striking difference w.r.t. Hamiltonian systems}

It is well-known fact, in the Hamiltonian framework, that if the Hamiltonian is $\mathscr C^2$ convex in the momenta, under Dirichlet boundary condition (otherwise this is in general not anymore true) the local contribution to the Maslov index or to the spectral flow, at each crossing instant (or verticality moment by using Arnol'd language) is positive. As direct consequence of the theory developed in \cite{PW20}, this implies also that the local contribution to the degree-index, in the classical framework is positive definite.

This property fails in the non-Hamiltonian case.  We will now construct an explicit simple non-selfadjoint operator in which the degree-index (in the Dirichlet bc) is negative. 
\begin{ex}\label{example-degree-minus-one}
The example is constructed by considering the following two matrices: 
\[
P=\begin{bmatrix}
1&0\\0&1/2
\end{bmatrix} \quad \textrm{  and } \quad P^{-1}L=\begin{bmatrix}
9/5&-4\\21/10&-4
\end{bmatrix}.
\]
By an explicit computation we get that the two eigenvalues of $P^{-1}L+isP^{-1}$ are
the given by: 
\begin{equation}
\lambda_1(s)=\frac{1}{10}(-11+15is-\sqrt{1-25s^2-290is}), \  \lambda_2(s)=\frac{1}{10}(-11+15is+\sqrt{1-25s^2-290is})
\end{equation}
and  $\lambda_2(0)=-1$. By an explicit computation and recalling the power series of the hyperbolic functions, we get that 
\begin{equation}
\det(G_{is}(x))=\lambda_1^{-1/2}\sinh(\lambda_1^{1/2}x)\lambda_2^{-1/2}\sinh(\lambda_2^{1/2}x).
\end{equation} 
By choosing $(s,x)=(0,\pi)$ and using the fact that $\lambda_2(0)=-1$, we get that 
\[
\det G_0(\pi)=0
\]
Given a  small neighborhood $\Omega$ of $(s,x)=(0,\pi)$,  we will show  that 
\[
\deg(\det(G_{is}(x)),\Omega,0)=-1.
\]
If $\Omega$ is sufficiently small and by taking the first order Taylor expansion, we get that, up to higher order terms:
\begin{equation}
\lambda_2(s)\sim_0\dfrac{1}{10}\Big[-11+15\,i\,s+1+\dfrac 1 2 (-290\,i\,s)\Big]=\dfrac{1}{10}(-10-130\,i\,s)=-1-13\,i\,s.
\end{equation}
In particular
\begin{align}
(-\lambda_2(s))^{1/2} \sim_0 (1+13\,i\,s)^{1/2}\sim_0 1+6.5\,i\,s.
\end{align}
Finally, we get 
\begin{align}
\sinh\big((\lambda_2(s))^{1/2}x\big)&=i\sin\big((-\lambda_2(s))^{1/2}x\big)=i\sin\big((-\lambda_2(s))^{1/2}x-\pi+\pi\big)\\&=-i\sin\big((-\lambda_2(s))^{1/2}x-\pi\big)\sim_0-i((1+6.5\,i\,s)x-\pi)\\&=6.5\,s\,x-i\,x+\,i\pi.
\end{align}
Being $\lambda_1^{-1/2}\sinh(\lambda_1^{1/2}x)\lambda_2^{-1/2}\neq 0$ in a sufficiently small neighborhood of $(0,\pi)$, then we get
\begin{align}
\deg(\det(G_{is}(x)),\Omega,0)=\deg(6.5\,s\,x-ix+i\pi,\Omega,0)=-1
\end{align}
where the last equality direct follows by looking at the sign of the determinant  of the Jacobian matrix induced by the map
\[
F: \R^2 \to \R^2 \quad \textrm{ defined by } \quad F(s,x)=\begin{bmatrix}
	6.5\,s\,x \, \pi\\ -x+\pi
\end{bmatrix}
\]
corresponding to the real and imaginary part of the complex function $6.5\,s\,x-ix+i\pi$.
\end{ex}

%%%%%%%%%%%%%%%%%%%%%%%%%%%%%%%%%%%%%%%%%%%%%%%%%%%%%%%%%%%%%%%%%
%%
%%
%%
%%
%%
%%
%%
%%%%%%%%%%%%%%%%%%%%%%%%%%%%%%%%%%%%%%%%%%%%%%%%%%%%%%%%%%%%%%%%%

\section{Application to reaction–diffusion systems}

This section is devoted to an explicit computation of the Morse index for a non-selfadjoint boundary value problem arising by considering the linearization  of  a reaction-diffusion system about a steady state solution. Furthermore, we will also study the relationship between the eigenvalues having negative real parts of the corresponding differential operator  and the number of conjugate points along the solution. 

Given a positive $L$, we start by considering the  1D reaction–diffusion system on $(0,+\infty) \times (0,L)$ given by
\begin{equation}\label{eq:rec-dif-sys}
u_t=Du_{xx}+F(u), 
\end{equation} 
where $u(t,x)\in \R^n,$ $D=\diag(d_1,\cdots,d_n)$ is the diagonal matrix having positive entries  $d_i>0$ for each $i=1, \ldots, n$ and $F \in \mathscr C^\infty(\R^n,\R^n)$. We recall that a {\bf steady state solution $u^*$} with Dirichlet boundary condition  is a solution of the  following bvp:
\[
\begin{cases}
Du_{xx}+F(u)=0\qquad \textrm{ on } (0,L) \\
u(0)=0= u(L)
\end{cases}
\]
After linearizing the system given in  in Equation~\eqref{eq:rec-dif-sys}, we end-up  
with the linear parabolic system given by: 
\begin{equation}\label{eq:ori-linear-sys}
v_t=D\frac{d^2}{dx^2}v+\nabla F(u^*)v.
\end{equation}
We let $\mathcal L$  be denoting the  linearized operator on $L^2((0,L), \R^n)$ having  dense domain $\mathcal D(\mathcal L):=W^{2,2}((0,L), \R^n) \cap W_0^{1,2}((0,L), \R^n) $ given by:
\begin{equation}
\mathcal L\=D\frac{d^2}{dx^2}+\nabla F(u^*): \mathcal D(\mathcal L) \subset L^2((0,L), \R^n)\longrightarrow L^2((0,L), \R^n).
\end{equation}
If not otherwise stated, from now on we only consider the case $n=2$ and we assume that 
\[
D=\begin{bmatrix}
1&0\\0&d
\end{bmatrix}, \qquad \qquad \textrm{ where } d>0.
\]
For the sake of convenience,  we set  $V:=\nabla F(u^*)=\begin{bmatrix}
v_{11}&v_{12}\\v_{21}&v_{22}
\end{bmatrix}$. 
\begin{rem}
If $F=\nabla G$ for some smooth function $G$, then $V$ is symmetric. In particular the differential operator $\mathcal L$ is selfadjoint.
\end{rem}
From now on,  we assume following two conditions hold
\begin{equation}\label{eq:stable-no-diff}
\tr V<0;\quad \det V>0
\end{equation}
and \begin{equation}\label{eq:unstable-with-diff}
v_{22}+dv_{11}>2\sqrt{d\det V}
\end{equation}
hold. In fact, condition \eqref{eq:unstable-with-diff} is equivalent to 
\begin{equation}\label{eq:unstable-with-diff-equi}
\M^2-4d\det V>0\quad \text{and}\quad \M>0.
\end{equation}
 where $\M=v_{22}+dv_{11}$. Therefore $d\neq 1$ since $\tr V<0$.
\begin{rem}
We observe that the above two conditions are classical in the stability analysis. . More precisely, they come out  from the Turing instability problem of the homogeneous steady state in a reaction-diffusion system. (Cfr. the interested reader to \cite[Section 2.3]{Mu03}) Roughly speaking, it means that the steady state is stable under small perturbations in the absence of diffusion but unstable under small spatial perturbations when diffusion is present. Turing is the first to study this phenomenon in \cite{Tur52}. In \cite[Section 2.3]{Mu03}, the author discuss the above two conditions for reaction-diffusion systems with zero flux (Neumann) boundary condition. It is worth mention that, recently  the authors in \cite[Section $5.3$]{BCCJM22} studied the Turing instability by using a Maslov-type index theory constructed for  non-Hamiltonian systems.
\end{rem}

%%%%%%%%%%%%%%%%%%%%%%%%%%%%%%%%%%%%%%%%%%%%%%%%%%%%%%%%%%%%%%%%%
%%
%%
%%
%%
%%
%%
%%
%%%%%%%%%%%%%%%%%%%%%%%%%%%%%%%%%%%%%%%%%%%%%%%%%%%%%%%%%%%%%%%%%

\subsection{Computation of the degree-index for 1D reaction-diffusion  systems}

The aim of this section is to perform an explicit computation of the degree-index for steady state solutions of 1D reaction-diffusion systems under the two  conditions given in Equation~\eqref{eq:stable-no-diff} and  Equation~\eqref{eq:unstable-with-diff}. 

Let 
\begin{equation}
-\mathcal L(s)\=-D\dfrac{d^2}{dx^2}-V+isI
\end{equation}
Under the notation above,  and setting  $P=D$ and $L=-V$,  by a direct calculation we get 
\[
P^{-1}L+isP^{-1}=\begin{bmatrix}
-v_{11}+is&-v_{12}\\-\dfrac 1 d v_{21}&-\dfrac 1 d v_{22}+\dfrac 1 dis
\end{bmatrix}.
\] 
In particular, it holds that 
\[
\tr \left(P^{-1}L+isP^{-1}\right)=\dfrac{1}{d}\big[-\M +is(d+1)\big], \qquad \det \left(P^{-1}L+isP^{-1}\right)= \dfrac{1}{d}\big[\det V -is\,\tr V-s^2\big].
\]

Letting
\begin{equation}
G(s,x)=\sum_{k=0}^{+\infty}\dfrac{x^{2k+1}}{(2k+1)!}\left(P^{-1}L+isP^{-1}\right)^k,
\end{equation}
then, it follows that $G(s,x)=G_{is}(x)P$ and so
\[
\deg\det(G_{is}(x))=\deg\det(G(s,x)). 
\]
Denoting by  $\lambda_{\pm}(s)$  the eigenvalues of matrix $P^{-1}L+isP^{-1}$, then we get 
\begin{equation}
\lambda_{\pm}(s)=\dfrac{-\M+(d+1)is\pm\sqrt{(\M^2-4d\det V)-(d-1)^2s^2+[4d\tr V-2(d+1)\M]is}}{2d}.
\end{equation}
\paragraph{ First case: $s \neq 0$.} In this case, if $\lambda_+(s)=0$ (or $\lambda_-(s)=0$), then we get that 
\begin{equation}
\det V=s^2\quad \text{and}\quad 
4\,d\ \tr V=0.
\end{equation}
We observe that these two conditions are incompatible with $d>0$ and $\tr V< 0$.  By this discussion, we get that   $\lambda_+(s)$ and  $\lambda_-(s)$ are both different from zero. Therefore, 
\begin{equation}
\sum_{k=0}^{+\infty}\dfrac{x^{2k+1}}{(2k+1)!}\lambda_{\pm}^k(s)=\dfrac{1}{\sqrt{\lambda_{\pm}(s)}}\sinh(\sqrt{\lambda_{\pm}(s)}x)
\end{equation}
and 
\begin{equation}
\det G(s,x)=\frac{1}{\sqrt{\lambda_{+}(s)}}\sinh(\sqrt{\lambda_{+}(s)}x)\cdot \frac{1}{\sqrt{\lambda_{-}(s)}}\sinh(\sqrt{\lambda_{-}(s)}x).
\end{equation}
$\sinh(\sqrt{\lambda_{+}(s)}x)=0 $ iff  $\sqrt{\lambda_{+}(s)}x=k\pi i$ for some $k\in \N$ and so $\lambda_{+}(s)=-(k^2\pi^2)/x^2$. Now, we show that this is not the case. In fact, we get
\begin{multline}
 -\M+(d+1)is+\sqrt{(\M^2-4d\det V)-(d-1)^2s^2+[4d\, \tr V-2(d+1)\M]is}=-\dfrac{2dk^2\pi^2}{x^2}\\
(\M^2-4d\det V)-(d-1)^2s^2+[4d\,\tr V-2(d+1)\M]is=\left[-\dfrac{2dk^2\pi^2}{x^2}+\M-(d+1)is\right]^2.
\end{multline}
By equating the imaginary parts of both sides, we get that 
\[ 
\tr V =\dfrac{(d+1)k^2\pi^2}{x^2}>0
\]
which is incompatible with $\tr V <0$. Therefore, we have  $\sinh(\sqrt{\lambda_{+}(s)}x)\neq 0 $   for every $s\neq 0$ and $x>0$. The same arguments hold for  $\lambda_{-}(s)$.

\paragraph{\bf Second case: $s=0$.} Then 
\begin{equation}
\lambda_{\pm}(0)=\frac{-\M\pm\sqrt{\M^2-4d\det V}}{2d}.
\end{equation}
Moreover, under the conditions given in Equation~\eqref{eq:stable-no-diff} and Equation~\eqref{eq:unstable-with-diff-equi}, it is easy to check that  both $\lambda_-(0)$ and $ \lambda_+(0)$ are real and negative. 

We assume that $(s,x)=(0,x_0)$ is a solution of  $\sinh(\sqrt{\lambda_{+}(0)}x)=0$, then $\sqrt{\lambda_{+}(0)}x_0=k\pi i$ for some $k\in\N$. Now, we  compute the degree-index $\deg\det(G(s,x),\Omega,0)$.  Choosing a sufficient small neighborhood  $\Omega$  of $(0,x_0)$ such that $\sinh(\sqrt{\lambda_{-}(s)}x)\neq 0$ for every $(s,x)\in \Omega$, then we have 
\begin{align}\label{eq:degree-simplify}
\deg(\det G(s,x),\Omega,0)&=\deg\Big(\frac{1}{\sqrt{\lambda_{+}(s)}}\sinh(\sqrt{\lambda_{+}(s)}x)\cdot \frac{1}{\sqrt{\lambda_{-}(s)}}\sinh(\sqrt{\lambda_{-}(s)}x),\Omega,0\Big)\\
&=\deg(\sinh(\sqrt{\lambda_{+}(s)}x),\Omega,0   ) +\deg(\sinh(\sqrt{\lambda_{-}(s)}x),\Omega,0   ) .
\end{align}

We need a technical lemma to calculate the degree.
	\begin{lem}\label{lm:cal_degree}
		We assume that  $f(s)=a+ibs +o(s)$ with $a, b\in \R\setminus{\{0\}}$ and that $\sinh (\sqrt{f(0)}x_0)=0$. Then
		we get 
		\[
		\deg(\sinh(\sqrt{f(s)}x),\Omega, 0  )=-\sgn(ab)
		\]
\end{lem}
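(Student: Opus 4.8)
The plan is to reduce the computation of the Brouwer degree to a local model and then read off the sign from a $2\times 2$ Jacobian determinant, exactly in the spirit of the final computation in Example~\ref{example-degree-minus-one}. First I would note that since $\sinh(\sqrt{f(0)}x_0)=0$, we have $\sqrt{f(0)}x_0 = k\pi i$ for some $k\in\N$, hence $f(0)=a<0$ and $x_0 = k\pi/\sqrt{-a}$; in particular $(0,x_0)$ is an isolated zero of $(s,x)\mapsto\sinh(\sqrt{f(s)}x)$ in a small box $\Omega$, and the degree is well-defined. The hypothesis $a\ne 0$ guarantees that $\sqrt{f(s)}$ is a well-defined analytic branch near $s=0$ with $\sqrt{f(0)}=i\sqrt{-a}$, and from $f(s)=a+ibs+o(s)$ one gets the first-order expansion $\sqrt{-f(s)} = \sqrt{-a} - \dfrac{ib}{2\sqrt{-a}}s + o(s)$, so that $\sqrt{f(s)} = i\sqrt{-f(s)}$ has a controlled linearization near $(0,x_0)$.

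Next I would perform the same manipulation as in the example: write $\sinh(\sqrt{f(s)}x) = i\sin(\sqrt{-f(s)}\,x)$ and shift by the known root, $\sin(\sqrt{-f(s)}\,x) = (-1)^k\sin\bigl(\sqrt{-f(s)}\,x - k\pi\bigr)$. Since the argument $\sqrt{-f(s)}\,x - k\pi$ is small on $\Omega$, we may replace $\sin$ by its argument up to higher-order terms (an admissible homotopy that does not change the degree, because the lowest-order term is the dominant one and has an isolated zero at $(0,x_0)$). Substituting the linearization of $\sqrt{-f(s)}$ and of the product with $x$, and dropping the irrelevant constant factor $i(-1)^k$ (which contributes $|\,i(-1)^k\,|$-independent... rather, contributes nothing to the degree since it is a nonzero constant), one is left with computing the degree at $0$ of an affine-type map whose real and imaginary parts are, to leading order,
\begin{equation}
\Real \mapsto -\frac{b}{2\sqrt{-a}}\,s\,x_0 \cdot(\text{positive const}),\qquad \Imm \mapsto \sqrt{-a}\,(x-x_0).
\end{equation}

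Finally I would compute the Brouwer degree as the sign of the Jacobian determinant of the associated real map $F:\R^2\to\R^2$. The Jacobian at $(0,x_0)$ is triangular: $\partial_x(\text{imaginary part}) = \sqrt{-a}>0$ while $\partial_s(\text{real part})$ is a positive multiple of $-b/\sqrt{-a}\cdot x_0$, and since $x_0>0$ and $\sqrt{-a}>0$ this has the sign of $-b\cdot\sgn(a)\cdot(\text{sign of }\sqrt{-a}) $; tracking all the signs (the factor $\sqrt{-a}$ coming from $\sqrt{f(0)}=i\sqrt{-a}$ forces an $a$-dependence because the branch of the square root and the replacement $\sinh = i\sin$ interact with $\sgn a$), one obtains $\det DF(0,x_0)$ with sign equal to $-\sgn(ab)$, hence $\deg(\sinh(\sqrt{f(s)}x),\Omega,0) = -\sgn(ab)$. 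The main obstacle I anticipate is bookkeeping the signs carefully: one must be consistent about which branch of $\sqrt{\cdot}$ is taken, about the factor $(-1)^k$ and the factor $i$ from $\sinh = i\sin$, and about the orientation conventions for $\Omega\subset\R^2_{(s,x)}$, so that the final sign genuinely depends only on $\sgn(ab)$ and not on $k$ or on $|a|$. Once the linearization is justified as an admissible homotopy — which is routine since the leading term is nondegenerate — the degree computation itself is the elementary $2\times 2$ determinant sign.
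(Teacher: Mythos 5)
Your strategy is viable and genuinely different from the paper's own proof: the paper does not linearize the composite map at all, but invokes the Leray product (composition) formula for the Brouwer degree, using that $\sinh$ has local degree $1$ near $k\pi i$ and $z\mapsto\sqrt z$ has local degree $1$ away from the origin, to reduce everything to the inner map $(s,x)\mapsto f(s)x^2$, whose Jacobian $2x^3(u'v-uv')$ is computed in one line and gives $\sgn(-ab)$ directly. Your route --- the branch $\sqrt{f(s)}=i\sqrt{-f(s)}$, the identity $\sinh(i\theta)=i\sin\theta$, the shift by $k\pi$ and a first-order homotopy --- is exactly the method of Example~\ref{example-degree-minus-one} and is legitimate (the linear part is nondegenerate since $b\neq0$, so the homotopy is admissible); what the paper's reduction buys is precisely the avoidance of the branch/shift/constant-factor bookkeeping that you must carry out by hand.

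That bookkeeping, however, is where your write-up goes wrong, and since the entire content of the lemma is a sign, this is not cosmetic. With your expansions, $w:=\sqrt{-f(s)}\,x-k\pi\approx\sqrt{-a}\,(x-x_0)-\dfrac{ib\,x_0}{2\sqrt{-a}}\,s$, and the map is $i(-1)^k\sin w\approx(-1)^k\Bigl[\dfrac{b\,x_0}{2\sqrt{-a}}\,s+i\,\sqrt{-a}\,(x-x_0)\Bigr]$: multiplication by $i$ sends $-\Imm w$ to the real part, so the $s$-coefficient of the real part is $+\dfrac{b\,x_0}{2\sqrt{-a}}$, not a positive multiple of $-\dfrac{b\,x_0}{2\sqrt{-a}}$ as in your displayed model, while the factor $(-1)^k$ multiplies both components and leaves the determinant unchanged. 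The Jacobian is then $\tfrac{b\,x_0}{2}$, so the degree equals $\sgn(b)$, which is $-\sgn(ab)$ because the hypothesis $\sinh(\sqrt{f(0)}\,x_0)=0$ with $a\neq0$, $x_0>0$ forces $a<0$ (a point you correctly noted). Taken literally, your model --- real part a positive multiple of $-\tfrac{b}{2\sqrt{-a}}\,s\,x_0$, imaginary part $\sqrt{-a}\,(x-x_0)$ --- has Jacobian of sign $-\sgn(b)=+\sgn(ab)$, the opposite of the statement, and your concluding sentence (``tracking all the signs \dots one obtains $-\sgn(ab)$'') asserts the correction rather than deriving it. So: right idea, right final answer, but the decisive determinant computation must be redone consistently, or replaced by the paper's product-formula reduction to $(s,x)\mapsto f(s)x^2$, which sidesteps the issue altogether.
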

\begin{proof}
	Since   $ g: z\to \sqrt z $ is analytic on small neighborhood of some $z\neq 0$ and $g'(z)\neq 0$, for every $z_0 \neq 0$, then we have
	$\deg (\sqrt z, \Omega_1,  z_0)=1$  where $\Omega_1$ is some small neighborhood of $z_0^2$ (cfr. \cite[Corollary 6.4.1, pag. 316]{DM21}).
	Similarly, we have $\deg(\sinh z, \Omega_2,  0)=1$ where $\Omega_2$ is some small neighborhood of $k\pi i$ with $k\in \N$.
	Then, by using the Leray product formula (cfr. \cite[Theorem 7.4.3]{DM21}), we can conclude that 
	\[
	\deg(\sinh(\sqrt{f(s)}x),\Omega, 0  )=\deg(f(s)\,x^2,\Omega, 0  ),
	\]
	where $\Omega$ is some small neighborhood of  $(0,x_0)$ with $ f(0)x_0^2=-k^2\pi^2$.
	Let $f(s)=u(s)+ i v(s)$. Then, we have
	\[
	\left | \begin{matrix} u'(s)x^2 & v'(s)x^2\\2x u(s)& 2x v(s)\end{matrix} \right |=2x^3(u'v-v'u).
	\]
Since $f(s)= a +i\,bs + o(s)$, it holds that $u(s)=a, v(s)=bs.$ Then it follows that
\[
\deg(\sinh(\sqrt{f(s)}x),\Omega, 0  )=\deg(f(s)x^2,\Omega, 0  )=\sgn(u'(0)v(0)-v'(0)u(0))=-\sgn(ab).
\]
This concludes the proof. 
\end{proof}

Now, we need to estimate $\lambda_{+}(s)$. For simplicity, we denote
	\begin{equation}
		\Delta_1\=\M^2-4d\det V>0,\quad \Delta_2\=4d\tr V-2(d+1)\M<0,
	\end{equation}
	then 
	\begin{align}
		\lambda_{+}(s)&=\frac{-\M+(d+1)is+\sqrt{\Delta_1-(d-1)^2s^2+\Delta_2is}}{2d}\\
		&=\frac{-\M+(d+1)is+\sqrt{\Delta_1}\sqrt{1+(-(d-1)^2s^2+\Delta_2is)/\Delta_1}}{2d}\\
		&\sim \frac{-\M+(d+1)is+\sqrt{\Delta_1}(1+\Delta_2is/(2\Delta_1))}{2d}\\
		&=\frac{-\M+\sqrt{\Delta_1}}{2d}+\frac{2(d+1)\sqrt{\Delta_1}+\Delta_2}{4d\sqrt{\Delta_1}}is.
	\end{align}
Note that  $-\M +\sqrt {\Delta_1} <0$ and $2(d+1)\sqrt{\Delta_1}+\Delta_2=2(d+1)(\sqrt{\Delta_1}-\M)+4d\tr V<0$ being the sum of two negative quantities.
By Lemma~\ref{lm:cal_degree}, we have 
\begin{equation}\label{eq:degree-1}
\deg(\sinh(\sqrt{\lambda_{+}(s)}x,\Omega,0)=-1,
\end{equation}
where $\Omega$ is a small neighborhood of $(0,x_0)$ such that $\lambda_+(0)x_0^2=-k^2\pi^2$. By the same arguments and by taking into account the minus sign in front of the square root, we have 
\begin{equation}\label{eq:degree+1}
\deg(\sinh(\sqrt{\lambda_{-}(s)}x,\Omega,0)=1.
\end{equation}
In order to compute the contribution to the degree at each zero, for the sake of convenience, we introduce the following notation. 

Given a proposition $\mathcal{P}$, we set 
\[
[\mathcal{P}]:= \begin{cases}1 \quad \textrm{ if $\mathcal{P}$ is true}
\\0 \quad \textrm{ if $\mathcal{P}$ is false }
\end{cases}.
\]
Then by Equation~\eqref{eq:degree-simplify},  we get  
	\begin{multline}\label{eq:local_degree}
	\deg(\det(G(s,x)),\Omega,0)=\left[\sqrt{-\lambda_-(0)x_0^2/\pi^2}\in \N \right]-\left[\sqrt{-\lambda_+(0)x_0^2/\pi^2}\in \N \right]\\
	=\begin{cases}
	1\quad \ \ \textrm{ if $\sqrt{-\lambda_-(0)x_0^2/\pi^2}\in \N$ and $\sqrt{-\lambda_+(0)x_0^2/\pi^2}\notin \N$}\\[8pt]
	-1\quad \textrm{ if $\sqrt{-\lambda_-(0)x_0^2/\pi^2}\notin \N$ and $\sqrt{-\lambda_+(0)x_0^2/\pi^2}\in \N$}\\[8pt]
	0\quad\ \ \textrm{ if $\sqrt{-\lambda_-(0)x_0^2/\pi^2}\in \N$ and $\sqrt{-\lambda_+(0)x_0^2/\pi^2}\in \N$}\\
	\end{cases}
	\end{multline}
where $\Omega$ is a sufficiently small neighborhood of $(0,x_0)$ such that $(0,x_0)$ is the only zero in $\Omega$.

\subsection{Negative real eigenvalues and conjugate points}

Let $a>0$ and we assume that conditions \eqref{eq:stable-no-diff} and \eqref{eq:unstable-with-diff} hold.   We consider the eigenvalue problem 
\begin{equation}
-\LL v=-D\frac{d^2}{dx^2}v-Vv=\lambda v,\quad x\in[0,a]
\end{equation}
under  Dirichlet boundary conditions for $\lambda \in (-\infty,0)$. 
Here we stress the fact that, from now on, we will  assume that the operator$-\LL$ is nondegenarate.

By using Equation~\eqref{eq:P-positive-matrix}, we can define 
\begin{equation}
G(\lambda,x)=\sum_{k=0}^{+\infty}\dfrac{x^{2k+1}}{(2k+1)!}(-P^{-1}V-\lambda P^{-1})^k,
\end{equation}
where $-P^{-1}V-\lambda P^{-1}=\begin{bmatrix}
-v_{11}-\lambda&-v_{12}\\-\dfrac 1 dv_{21}&-\dfrac 1 d(v_{22}+\lambda)
\end{bmatrix}$. Then we have 
\begin{equation}
\lambda\in\sigma(-\LL)\quad\iff\quad  \det G(\lambda,a)=0.
\end{equation}
We  note that
	\[
	\det G(\lambda,a)=\dfrac{1}{\sqrt{\mu_1}}\sinh(\sqrt{\mu_1}a)\frac{1}{\sqrt{\mu_2}}\sinh(\sqrt{\mu_2}a),
	\]
	where $\{\mu_1,\mu_2\}=\sigma(-P^{-1}V-\lambda P^{-1})$.
By a direct calculations we get
\begin{equation}\label{eq:eigenvalue-matrix-condition}
\mu\in \{\mu_1,\mu_2\}\ \iff \ \mu^2+\frac 1 d (\M+(d+1)\lambda)\mu+\frac 1 d (\lambda^2+\lambda \tr V+\det V)=0.
\end{equation}
Note that $\mu_1,\mu_2 \neq0$, otherwise there holds $\lambda^2+\tr V\lambda+\det V=0$ which is impossible for $\lambda<0$ being $\tr V <0$ and $\det V>0$.

So, we get 
$\lambda\in \sigma (-\LL)$ if and only if $ \det G(\lambda,a)=0$ which imples that  the equation 
\[
 \mu^2+\dfrac 1 d (\M+(d+1)\lambda)\mu+\dfrac 1 d (\lambda^2+\tr V\lambda+\det V)=0 
 \]
 has a solution $\mu$ such that $\sinh(\sqrt{\mu}a)=0$.

So, $
\lambda\in \sigma (-\LL)\cap \R^-$ if and only if  $\lambda$ is a negative solution of 

\begin{equation}\label{eq:eigen-matrix-condition-equi}
	\lambda^2+((d+1)\mu+\tr V)\lambda+d\mu^2+\M\mu+\det V=0,
\end{equation}
with $\mu =-(k^2\pi^2)/a^2$ for some $k\in \N^+$.

Let $\lambda_i, i=1,2$ be the solutions of equation \eqref{eq:eigen-matrix-condition-equi}. Then, we get three cases
\begin{itemize}
\item[(1)]  $\lambda_1<0$ and $\lambda_2<0$
\item[(2)]  $\lambda_1<0$ and $\lambda_2=0$
\item[(3)]  $\lambda_1<0$ and $\lambda_2>0$.
\end{itemize} 
Since $\lambda_1+\lambda_2=-((d+1)\mu+\tr V)>0$, then cases (1) and (2) cannot happen.

In the  case (3) we have 
\begin{equation}\label{eq:condition-for-eigen}
\lambda_1\cdot\lambda_2=d\mu^2+\M\mu+\det V=:g(\mu)<0.
\end{equation}
Recalling that $\Delta_1= \M^2 -4\,d \det V$ we denote by $\mu_-$ and $\mu_+$ the two distinct real and negative zeros of $g(\mu)=0$. By this discussion we get that 
% 
%	\begin{equation}\label{eq:value-of-mu}
%		\dfrac{-\M-\sqrt{\M^2-4d\det V}}{2d}<\mu =-\frac{k^2\pi^2}{a^2} <\frac{-\M+\sqrt{\M^2-4d\det V}}{2d}<0.
%	\end{equation}
%	Now we can get 
the number of negative eigenvalues of operator $-\LL$ are the following:
	\begin{align}\label{eq:number-of-negative-eigen}
		\#\Set{\lambda\in\sigma(-\LL)\cap\R^-}=\#\Set{k\in\N^+\ \Big|\dfrac{\M-\sqrt{\Delta_1}}{2d}a^2<k^2\pi^2<\dfrac{\M+\sqrt{\Delta_1}}{2d}a^2}.
	\end{align}

\begin{rem}
The eigenvalues in the  (LHS) of Equation~\eqref{eq:number-of-negative-eigen}  are counted according to their multiplicities.
It is possible that even if two different $k_1,k_2$  define two different $\mu_1$ and $\mu_2$, the solution $\lambda$ of the Equation~\eqref{eq:eigen-matrix-condition-equi} is the same. In this case   $\lambda$ is a negative eigenvalue of $-\LL$ with multiplicity $2$.
%has the same negative solution 
%with 
%\[
%\mu=-\dfrac{k_1^2\pi^2}{a^2},-\dfrac{k_2^2\pi^2}{a^2}.
%\]
%In this case, $\mu_1,\mu_2$ are different solutions of the equation \eqref{eq:eigen-matrix-condition-equi} with some fixed negative number $\lambda$. Then such $
\end{rem}
We can prove that  operator $-\LL$ has no eigenvalues such that $\lambda\in\C \setminus \R $ and  $\Real \lambda<0$. We observe that 
\[
\iMorse{-\LL} \ge \# \Set{\textrm{negative eigenvalues of } -\LL}.
\]
We will show that  the equation holds and then all the eigenvalues of $-\LL$ with negative real parts are all negative eigenvalues.

By invoking Theorem~\ref{thm:Morse_index_theorem}, we have 
\[
\iMorse{-\LL}= \deg(\det(G_{is}(x)),\Omega,0),
\]
with $\Omega=[-M,M]\times [\delta, a]$.
We observe that  if $\det G_{is}(x)=0$ then we have $s=0$. 
%and 
%\[
%\deg(\det(G(s,x)),\Omega,0)=N-P,
%\]
%where $N,P$ were defined at Equation~\eqref{eq:local_degree}. 

So, by Equation~\eqref{eq:local_degree} we have
\begin{multline}
\iMorse{-\LL}=\deg(\det(G_{is}(x)),\Omega,0)=\sum_{0<x_0<a} \left[\sqrt{-\lambda_-(0)x_0^2/\pi^2}\in \N \right]-\left[\sqrt{-\lambda_+(0)x_0^2/\pi^2}\in \N \right]\\
=\sum_{ 0<k\pi <\sqrt{-\lambda_-(0)}a}1 -\sum_{ 0<k\pi<\sqrt{-\lambda_+(0)}a} 1=\#\Set{k\in\N^+\ \Big| \ \dfrac{\M-\sqrt{\Delta_1}}{2d}a^2<k^2\pi^2<\dfrac{\M+\sqrt{\Delta_1}}{2d}a^2}\\
\overset{\eqref{eq:number-of-negative-eigen}}{=}  \#\Set{\textrm{ negative real eigenvalues  of } -\LL}.
\end{multline}
We note that since, by assumption, the operator $-\LL$ is nondegenerate,  
\[
((\M-\sqrt{\Delta_1})a^2)/(2d)\neq k^2\pi^2.
\]
By this equality we get that the Morse index $\iMorse{-\LL}$ of $\LL$,  counting all eigenvalues having negative real part coincides with the set of all negative real eigenvalues. In particular, there exists no complex not real eigenvalue having negative real part. 

Next, we compute the number of conjugate points in $(0,a)$. We recall that $x_0\in (0,a)$ is called a conjugate point if $-\LL v=0$ has a nontrival solution $v$ such that $v(0)=v(x_0)=0.$ In fact, we have 
	\begin{equation}
		x_0\ \text{is a conjugate point in} \ (0,a)\ \iff\ G(0,x_0)=\sum_{k=0}^{+\infty}\dfrac{x_0^{2k+1}}{(2k+1)!}(-P^{-1}V)^k \ \textrm{ is degenerate.}
	\end{equation}
	If $\mu\in\sigma(-P^{-1}V)$, then $\mu\neq0$ since $\det V>0.$ Being $\lambda=0$, we observe that in order  $\lambda=0$ to be  a solution of  Equation~\eqref{eq:eigen-matrix-condition-equi}, then  $\mu$ has to be a zero of $g(\mu)$. By the previous characterization of $\det G(\lambda, a)$, we get that since $\mu\in\sigma(-P^{-1}V)$ then	 $\dfrac{1}{\sqrt{\mu}}\sinh(\sqrt{\mu}x_0)\in\sigma(G(0,x_0))$. Therefore, if 
	\begin{multline}\label{eq:conjugate-condition-equi}
		G(0,x_0)\ \textrm{ is degenerate }\ \Rightarrow \sinh(\sqrt{\mu}x_0)=0.  \\
		\textrm{ Then } \sqrt{\mu}x_0=k\pi i\ \iff \mu=-\dfrac{k^2\pi^2}{x_0^2}<0
	\end{multline}
	for some $k\in\N^+$.  If $\mu=\mu_-,$ then we have
	\begin{equation}
		\dfrac{-\M-\sqrt{\Delta_1}}{2d}=-\dfrac{k^2\pi^2}{x_{01}^2} \quad \Rightarrow \quad x_{01}=\sqrt{\frac{2d}{\M+\sqrt{\Delta_1}}}\cdot k\pi.
	\end{equation}
	If $\mu=\mu_+,$ then we have
	\begin{equation}
		\dfrac{-\M+\sqrt{\Delta_1}}{2d}=-\dfrac{k^2\pi^2}{x_{02}^2}\quad \Rightarrow\quad  x_{02}=\sqrt{\dfrac{2d}{\M-\sqrt{\Delta_1}}}\cdot k\pi.
	\end{equation}
	Clearly, there holds $0<x_{01}<x_{02}$. Then, without considering the multiplicity we have  
	\begin{align}
		&\#\{\text{Conjugate points in } (0,a)\ \text{without multiplicity}\}\\&=\#\Big\{ x_0\in (0,a)\ \Big| \ \exists\ k\in\N^+ \ \text{such that } x_0= \sqrt{\frac{2d}{\M+\sqrt{\Delta_1}}}\cdot k\pi \ \text{or } x_0= \sqrt{\dfrac{2d}{\M-\sqrt{\Delta_1}}}\cdot k\pi \Big\}\\
		&=\#\Big\{ x_0\in (0,a)\ \Big| \ \exists\ k\in\N^+ \ \text{such that } x_0= \sqrt{\dfrac{2d}{\M+\sqrt{\Delta_1}}}\cdot k\pi  \Big\}+\\
		&\#\Big\{ x_0\in (0,a)\ \Big| \ \exists\ k\in\N^+ \ \text{such that } x_0= \sqrt{\dfrac{2d}{\M-\sqrt{\Delta_1}}}\cdot k\pi  \Big\}-\\
		&\#\Big\{ x_0\in (0,a)\ \Big| \ \exists\ k_1, k_2\in\N^+ \ \text{such that } x_0= \sqrt{\dfrac{2d}{\M+\sqrt{\Delta_1}}}\cdot k_1\pi=\sqrt{\dfrac{2d}{\M-\sqrt{\Delta_1}}}\cdot k_2\pi  \Big\}.
	\end{align}
	
	Now, if $x_0$ is a conjugate point in $(0,a)$ having multiplicity $2$ then this is equivalent to the fact that $0$ is an eigenvalue of $G(0,x_0)$ with multiplicity $2$. Therefore, we have 
	\begin{equation}
		\frac{1}{\sqrt{\mu_1}}\sinh(\sqrt{\mu_1}x_0)=\frac{1}{\sqrt{\mu_2}}\sinh(\sqrt{\mu_2}x_0)=0
	\end{equation}
	or equivalent to 
	\begin{equation}\label{eq:non-generic}
		x_0^2=-\dfrac{k_1^2\pi^2}{\mu_1}=-\frac{k_2^2\pi^2}{\mu_2}\iff\dfrac{\mu_1}{\mu_2}=\dfrac{k_1^2}{k_2^2},\quad \text{where } k_i\in\N^+ \ \textrm{ and } k_i<\dfrac{a\sqrt{-\mu}}{\pi}, i=1,2..
	\end{equation}
	We define following three sets of conjugate points:
	\begin{align}\label{eq:conjugate-points-set}
		\Con_1&=\Big\{ x_0\in (0,a)\ \Big| \ \exists\ k\in\N^+ \ \text{such that } x_0= \sqrt{\frac{2d}{\M+\sqrt{\Delta_1}}}\cdot k\pi  \Big\}\\
		\Con_2&=\Big\{ x_0\in (0,a)\ \Big| \ \exists\ k\in\N^+ \ \text{such that } x_0= \sqrt{\frac{2d}{\M-\sqrt{\Delta_1}}}\cdot k\pi  \Big\}\\
		\Con_3&=\Big\{ x_0\in (0,a)\ \Big| \ \exists\ k_1, k_2\in\N^+ \ \text{such that } x_0= \sqrt{\frac{2d}{\M+\sqrt{\Delta_1}}}\cdot k_1\pi\\&=\sqrt{\frac{2d}{\M-\sqrt{\Delta_1}}}\cdot k_2\pi  \Big\}.
	\end{align}
	In particular, each conjugate point in $\Con_3$ has multiplicity $2$. Then, by taking into account  the multiplicity we get 
	\begin{align}
		\#\{\text{Conjugate points in } (0,a) \ \text{with multiplicity }\}=\#\Con_1+\#\Con_2.
	\end{align}
	We define following three sets 
	\begin{multline}\label{eq:number-conjugate-equi}
		\Con_1'=\Set{ k\in \N^+\ \Big| \ 0< \sqrt{\dfrac{2d}{\M+\sqrt{\Delta_1}}}\cdot k\pi <a}=\Set{k\in \N^+\ \Big| \ 0<  k^2\pi^2 <\dfrac{\M+\sqrt{\Delta_1}}{2d}a^2}\\
		\Con_2'=\Set{ k\in \N^+\ \Big| \ 0< \sqrt{\dfrac{2d}{\M-\sqrt{\Delta_1}}}\cdot k\pi <a}=\Set{k\in \N^+\ \Big| \ 0<  k^2\pi^2 <\dfrac{\M-\sqrt{\Delta_1}}{2d}a^2}\\
		\Con_3'=\left\{k_1\in\N^+\ \Big| \ \exists\ k_2\in\N^+ \text{such that }\right.\\ \left. 0<  \sqrt{\dfrac{2d}{\M+\sqrt{\Delta_1}}}\cdot k_1\pi=\sqrt{\dfrac{2d}{\M-\sqrt{\Delta_1}}}\cdot k_2\pi <a\right\}.
	\end{multline}

	Being
	\[
	\sqrt{\dfrac{2d}{\M+\sqrt{\Delta_1}}}<\sqrt{\dfrac{2d}{\M-\sqrt{\Delta_1}}},
	\]
	then, we get  $\Con_3'\subset\Con_2'\subset \Con_1'.$ Moreover, we have $\#\Con_i=\#\Con_i', i=1,2,3.$
	
	For every $x_0\in \Con_i, i=1,2,3$, we assume $\Omega $ is a sufficiently small neighborhood of $(0,x_0)$ such that $(0,x_0)$ is the only one singularity. By formula \eqref{eq:local_degree}, we have 
	\begin{equation}
		\deg(\det G(s,x), \Omega, 0)=\begin{cases}
			1 \quad \textrm{ if }  x_0\in\Con_1\ \text{but}\ x_0\notin\Con_2 \\
			-1 \quad \textrm{ if }  x_0\notin\Con_1\ \text{but}\ x_0\in\Con_2 \\
			0 \quad \textrm{ if }  x_0\in\Con_1\cap\Con_2 \\
		\end{cases}
\end{equation}

Therefore, combining Equation~\eqref{eq:number-conjugate-equi} with Equation~\eqref{eq:number-of-negative-eigen}, we get 
\begin{multline}\label{eq:index-equal-eigen-number}
\icon (\psi, R)=\#\Con_1- \#\Con_2\\
=\Big\{ k\in \N^+\ \Big| \ \dfrac{\M-\sqrt{\Delta_1}}{2d}a^2<  k^2\pi^2 <\dfrac{\M+\sqrt{\Delta_1}}{2d}a^2 \Big\};\\
=\#\{ \lambda\in\sigma(-\LL)\ | \ \lambda<0  \}.
\end{multline}
We note, once again, that since we assume that the operator $-\LL$ is nondegenerate, so $((\M-\sqrt{\Delta_1})a^2)/(2d)\neq k^2\pi^2$. 
\begin{rem}
	 Condition \eqref{eq:non-generic} is analogous to what authors in  \cite[Definition $5.2$]{BCCJM22} termed  \textbf{non-generic} in the special case in which the entry $d$ of the matrix $D$ is equal $1$. In fact, under condition \eqref{eq:non-generic} we have $\#\Con_3\neq0$ which is equivalent to the fact that there exists a conjugate point in $(0,a)$ having  multiplicity $2.$ In this case, the equality 
	 \begin{equation}\label{eq:eigen-equal-conju}
	\#\Set{\lambda\in\sigma(-\LL)\ | \ \lambda<0}=\#\Set{\textrm{Conjugate points in } (0,a)\ \textrm{ without multiplicity}}
	 \end{equation}
may fail. In the quoted paper authors provide a proof of the previous equality under some generic assumptions. 

We observe that if $\M\gg 4d\det V$, then we have 
\[
\dfrac{\M-\sqrt{\Delta_1}}{2d}a^2\sim 0
\]
and consequently $\Con_2=\emptyset$. In particular Equation~\eqref{eq:eigen-equal-conju} holds. Moreover, in this case the index 
\[
\icon (\psi, R)\geq 0.
\]
\end{rem}

%%%%%%%%%%%%%%%%%%%%%%%%%%%%%%%%%%%%%%%%%%%%%%%%%%%%%%%%%%%%%%%%%
%%
%%
%%
%%
%%
%%
%%
%%%%%%%%%%%%%%%%%%%%%%%%%%%%%%%%%%%%%%%%%%%%%%%%%%%%%%%%%%%%%%%%%

\section{Some counterexamples}

In this section we construct a couple of non-generic examples of linear planar second order non-selfadjoint systems in which the equality between the the total number of negative and real eigenvalues and the total number of conjugate points, fails.

\begin{ex}\label{example-non-generic}
We let $D=\Id_2$, we set $V=\begin{bmatrix}
4&1\\0&9 \end{bmatrix}$ and we choose $a=4$.
The associated bvp is the following
\begin{equation}
\begin{cases}
u''(x)+Vu(x)=0\qquad x\in [0,4]\\
u(0)=0=u(4)
\end{cases}
\end{equation} 
It is easy to check that condition provides at Equation~\eqref{eq:non-generic} is satisfied, namely, it is \textbf{ non-generic}. We will show that Equality \eqref{eq:eigen-equal-conju} does not hold.

Note that in this case $a=4, d=1, \M=\tr V=13, \det V=36.$  Then Equation~\eqref{eq:eigen-matrix-condition-equi} fits into the following one
\begin{equation}\label{eq:example-eigen-condition}
\lambda^2+(2\mu+13)\lambda+\mu^2+13\mu+36=0.
\end{equation}
Let $\mu=-\dfrac{k^2\pi^2}{16}<0$ for some $k\in\N^+.$ Since $(2\mu+13)^2-4(\mu^2+13\mu+36)=25>0$, then Equation \eqref{eq:example-eigen-condition} has to have two different real solutions $\lambda_1, \lambda_2$ and we assume that  $\lambda_1<\lambda_2$. By a direct computation we have $g(\mu)\=\mu^2+13\mu+36$ has two different zeros $\mu_1=-9, \mu_2=-4$.  About the sign of the solutions $\lambda$ of Equation~\eqref{eq:example-eigen-condition}, only three cases can occur. 

\paragraph{Case One:}  $\lambda_1<\lambda_2<0$.

This is equivalent to the following two facts
\begin{align}
&\lambda_1+\lambda_2=-(2\mu+13)<0 \quad \textrm{ and } 
\lambda_1\cdot\lambda_2=\mu^2+13\mu+36>0;\\
&\iff \mu>-\dfrac{13}{2};\quad \mu>\mu_2=-4\ \iff \mu=-\dfrac{k^2\pi^2}{16}>-4.
\end{align}
Then we have $0<k<\dfrac{8}{\pi}$ and consequently $k=1$ or $k=2$. In this case $-\LL$ has $4$ negative eigenvalues. 

\paragraph{Case Two:}  $\lambda_1<\lambda_2=0$.

This is equivalent to 
\begin{align}
&\lambda_1+\lambda_2=-(2\mu+13)<0  \quad \textrm{ and }\quad \lambda_1\cdot\lambda_2=\mu^2+13\mu+36=0\\
&\iff \mu>-\dfrac{13}{2}\quad \textrm{ or } \quad \mu=\mu_2=-4\ \iff \mu=-\dfrac{k^2\pi^2}{16}=-4
\end{align}
and this last equality is never satisfied.   In this case $-\LL$ has no negative eigenvalues.

\paragraph{Case Three:}  $\lambda_1<0<\lambda_2$.

This is equivalent to 
\begin{align}
 \lambda_1\cdot\lambda_2=\mu^2+13\mu+36<0 
\iff   -9=\mu_1<\mu=-\dfrac{k^2\pi^2}{16}<\mu_2=-4.
\end{align}
In this case,  we get that the above inequality holds iff $k=3$. In this case $-\LL$ has $1$ negative eigenvalue. 

Summarizing, the operator $-\LL$ has $5$ negative eigenvalues each one of multiplicity $1$.  In particular, the operator $-\LL$ is nondegenerate. If not,  
there exists a vanishing eigenvalue, namely  $0=\lambda_1<\lambda_2$. However, this case can be ruled out precisely as in the \textbf{case two}. Thus we have 
\begin{equation}
\#\{ \lambda\in \sigma(-\LL)\ | \ \lambda<0  \}=\#\{ \lambda\in \sigma(-\LL)\ | \ \lambda\leq 0  \}=5.
\end{equation}

Next we compute the conjugate points. Since $\sigma(-V)=\{-4, -9\}$, then by Equation~\eqref{eq:conjugate-condition-equi}, we have
\begin{equation}
x_0\ \text{is a conjugate poin in } (0,4)\iff x_0=\frac{k\pi}{2} \text{or }\frac{k\pi}{3} \text{for some } k\in\N^+ \text{and } x_0\in(0,4).
\end{equation}
By this discussion, we get that the  conjugate points are $\pi/2$ with multiplicity $1$, $\pi/3$ with multiplicity $1$, $2\pi/3$ with multiplicity $1$ and $\pi$ with multiplicity $2$ and moreover, we can check that $x_0=4$ is not a conjugate point. Therefore,  we have 
\begin{multline}
\#\{  \text{ Conjugate points in } (0,4)\ \text{without multiplicity} \}\\=\#\{  \text{Conjugate points in } (0,4]\ \text{without multiplicity} \}=4 \\ 
\#\{ \text{Conjugate points in } (0,4)\ \text{with multiplicity} \}\\=\#\{  \text{Conjugate points in } (0,4]\ \text{with multiplicity} \}=5.
\end{multline}
Then 
\begin{equation}
\#\{ \lambda\in \sigma(-\LL)\ | \ \lambda<0  \} \neq\ \#\{  \text{ Conjugate points in } (0,4)\ \text{without multiplicity} \}.
\end{equation}	
\end{ex}
\smallskip

We observe that in Example~\ref{example-non-generic},  both conditions \eqref{eq:stable-no-diff} and \eqref{eq:unstable-with-diff} are not satisfied. 

We now construct an example in which both conditions  \eqref{eq:stable-no-diff} and \eqref{eq:unstable-with-diff} hold but nevertheless the equality provided at Equation~\ref{eq:eigen-equal-conju} fails.

\begin{ex}
	Let $D=\begin{bmatrix}
	1&0\\0&1/2
	\end{bmatrix}$, $V=\begin{bmatrix}
	-1&-2\\49/128&3/4
	\end{bmatrix}$ and we choose $a=16.$

In this case, we have 
\begin{equation}
d=1/2,\ \tr V=-1/4<0,\ \det V=1/64>0,\ \M=1/4>2\sqrt{d\det V}=\sqrt{2}/8.
\end{equation}
Hence  both conditions given at Equation~\eqref{eq:stable-no-diff} and Equation~\eqref{eq:unstable-with-diff} hold. Then
\begin{equation}
\Delta_1=\M^2-4d\det V=1/32>0.
%\qquad \Delta_2=4d\tr V-2(d+1)\M=-5/4<0.
\end{equation} 
 Consequently,
 \begin{equation}
 \lambda_{+}(0)=\dfrac{-\M+\sqrt{\Delta_1}}{2d}=\dfrac{\sqrt{2}-2}{8}<0\qquad  \lambda_{-}(0)=\dfrac{-\M-\sqrt{\Delta_1}}{2d}=-\dfrac{\sqrt{2}+2}{8}<0.
 \end{equation} 

We now compute the conjugate points. By Equation~\eqref{eq:conjugate-points-set}  we have
\begin{align}
\Con_1&=\Set{ x_0\in (0,16)\ \Big| \ \exists\ k\in\N^+ \ \text{such that } x_0= \sqrt{\dfrac{8}{\sqrt{2}+2}}\cdot k\pi}\\&=\Set{\sqrt{\dfrac{8}{\sqrt{2}+2}}\cdot\pi,\ \sqrt{\frac{8}{\sqrt{2}+2}}\cdot 2\pi, \sqrt{\dfrac{8}{\sqrt{2}+2}}\cdot 3\pi}\\
\Con_2&=\Set{x_0\in (0,16)| \exists\ k\in\N^+ \ \text{such that } x_0= \sqrt{\dfrac{8}{2-\sqrt{2}}}\cdot k\pi}=\Set{\sqrt{\dfrac{8}{2-\sqrt{2}}}\cdot \pi}\\
\Con_3&=\Set{ x_0\in (0,16)| \exists\ k_1, k_2\in\N^+ \ \text{such that } x_0= \sqrt{\dfrac{8}{\sqrt{2}+2}}\cdot k_1\pi=\sqrt{\dfrac{8}{2-\sqrt{2}}}\cdot k_2\pi }=\emptyset.
\end{align}
Therefore, we have $\#\Set{\textrm{Conjugate points in } (0,16)}=4.$

Next, we compute the negative eigenvalues of $-\LL.$ By using Equation~\eqref{eq:number-of-negative-eigen} we have
\begin{align}
\#\Set{\lambda\in\sigma(-\LL)| \lambda<0 }&=\#\Set{ k\in\N^+\ \Big| \ \dfrac{2-\sqrt{2}}{8}\cdot 256<k^2\pi^2<\dfrac{2+\sqrt{2}}{8}\cdot 256 }=2.
\end{align}
Then by Equation~\eqref{eq:index-equal-eigen-number} we have
\begin{equation}
\icon (\psi, R)=^{\#}\{ \lambda\in\sigma(-\LL)\ | \ \lambda<0  \}=2.
\end{equation}
So, also in this case the total number of negative eigenvalues which is  $2$ differs to the total number of conjugate points which is $4$.
\end{ex}

\newpage

\appendix

\section{Operator-valued one form: a finite dimensional reduction}

In this section we provide the basic results mainly used in the proof of Proposition~\ref{thm:lemma-4}. For the proof, we refer the interested reader to \cite[Section 2]{PW20}.

We start by recalling that the spectrum of an operator  having compact resolvent consists of isolated points of finite multiplicity (cf. \cite[Theorem 6.29, pag. 187]{Kat80}). We assume   $A_t$ has compact resolvent;  then,   for each $t\in [0,1]$, there  exist positive numbers $c $ and $ M$ (depending on $t$) such that $\sigma(A_{t}) \cap   \s_{c,M}=\emptyset$, where $\sigma(A_{t})$ denotes the spectrum of $A_{t}$ and  $ \s_{c,M}\=\{\lambda=t+is \in \C\ | \ |t|= c,|s|\leq M\ \text{or}\ |t|\leq c, |s|=M\}$.

For each $t\in [0,1]$, we set   
\[
P_t \=-\dfrac{1}{2\pi i} \int_{\s_{c,M}}(A_t-\lambda I)^{-1}\dd\lambda 
\]
the projection onto the total eigenspace corresponding to the
eigenvalues of $A_t$ inside the rectangle $\{t+is\in\C\ | \ |t|\leq c, |s|\leq M\}$ and we observe that   $P_tA_t=A_tP_t$.
We set  $Q_{t}\= \Id - P_{t}$ and $z=t+is$ we define the following two operators
\begin{equation}\label{eq:M-N}
N_{z}\=  Q_{t}+ P_{t}\, (A_t+isI)\, P_t \quad \textrm{ and } \quad  M_z= P_t+ Q_t\,(A_t+isI)\,  Q_t.
\end{equation}
Setting $A_z\=A_t+isI$ and by a direct calculation, for each $z\in \Omega:=[0,1] \times [-M, M]$, the following holds
\begin{equation}\label{eq:nuova}
A_z=M_zN_z=Q_tA_zQ_t+ P_tA_zP_t.
\end{equation}

\begin{lem}\label{thm:importante}
	Let $X \subset \C$ be open,  $A_{z_0}$ be an invertible operator such that $A_{z_0}^{-1}\in  \trace(H)$ and for each $z \in X$, we let 
	\[
	A_z\=A_{z_0}+C_z.
	\]
	where $C_z \in \Lin(H)$. We assume that the map $z\mapsto C_z$  is of class $\mathscr C^1$ and we assume that  for some $c>0$,  	 $\sigma(A_{z_0})\cap \s_{c,M}=\emptyset$.
	 
Then, there exists a convex neighborhood $W_{z_0}$  of $z_0$ such that   
\[
A_z=M_zN_z \qquad  \textrm{ for every }z \in W_{z_0}
\]
where  $M_z$ and $N_z$ are given in Equation \eqref{eq:M-N}.   Moreover,  the following hold
	\begin{enumerate}
		\item[1.]  The maps $z \mapsto N_z$ and $z\mapsto M_z$ are of class $\mathscr C^1$ on $W_{z_0}$ 
		\item[2.] $\Tr\big[\dd M_z M_z^{-1}\big]$ is an exact  one-form on $W_{z_0}$.
	\end{enumerate}
	Let $V_{z_0}\=\set{z\in W_{z_0}}{A_z \ \rm{ invertible}}\subset W_{z_0}$. Then
	\begin{enumerate}
		\item[3.]  $\Tr\big[ \dd A_z A_z^{-1}\big]=\Tr\big[ \dd M_zM_z^{-1}\big]+\Tr \big[\dd N_z N_z^{-1}\big]$ on $V_{z_0}$
		\item[4.] $\Tr\big[\dd N_z  N_z^{-1}\big]$ is a closed but not exact one-form on $V_{z_0}$.
	\end{enumerate}
\end{lem}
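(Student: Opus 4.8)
The plan is to establish the decomposition and the four assertions by working locally near $z_0$ and exploiting the holomorphic (or $\mathscr C^1$) dependence of the Riesz projector $P_z$ on $z$. First I would observe that since $z \mapsto C_z$ is $\mathscr C^1$ and $A_z^{-1} = A_{z_0}^{-1}(\Id + C_z A_{z_0}^{-1})^{-1}$ for $z$ near $z_0$, the resolvents $(A_z - \lambda \Id)^{-1}$ are defined and jointly continuous in $(z,\lambda)$ for $\lambda \in \s_c$, once we shrink to a convex neighborhood $W_{z_0}$ on which $\sigma(A_z) \cap \s_c = \emptyset$ (this persists by an open-ness argument, since the distance from $\sigma(A_{z_0})$ to $\s_c$ is positive and small bounded perturbations move the spectrum only slightly). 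Consequently $P_z = -\frac{1}{2\pi i}\int_{\s_c}(A_z-\lambda\Id)^{-1}\,\dd\lambda$ is a $\mathscr C^1$ family of finite-rank projectors with $P_z A_z = A_z P_z$, and $\dim\Imm P_z$ is locally constant. Setting $Q_z = \Id - P_z$, $N_z = Q_z + P_z A_z P_z$ and $M_z = P_z + Q_z A_z Q_z$, the factorization $A_z = M_z N_z = Q_z A_z Q_z + P_z A_z P_z$ is the purely algebraic identity \eqref{eq:nuova}, valid on $W_{z_0}$; the $\mathscr C^1$ regularity of $N_z$ and $M_z$ in assertion 1 is then immediate from that of $P_z$ and $C_z$, proving item 1.

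For item 2, the point is that $M_z = P_z + Q_z A_z Q_z$ differs from the identity only by a finite-rank operator on $\Imm Q_z$; more precisely, $M_z$ is invertible on $W_{z_0}$ (its restriction to $\Imm Q_z$ is $Q_z A_z Q_z$, which is invertible there because the part of the spectrum of $A_z$ outside $\s_c$ is bounded away from $0$ after we also arrange $c$ small enough, or rather because $A_z$ restricted to the complementary spectral subspace is what carries the ``large'' spectrum). Then $\Tr[\dd M_z M_z^{-1}] = \dd \log\det(Q_z A_z Q_z|_{\Imm Q_z})$ — but this is where one must be careful: $\Imm Q_z$ is infinite-dimensional, so one instead writes $M_z = \Id + K_z$ with $K_z = P_z - Q_z + Q_z A_z Q_z$ of a form making $\det(M_z)$ a genuine (Fredholm or finite-dimensional) determinant, or one uses that $\dd M_z M_z^{-1} = \dd(\log M_z)$ modulo exact forms via the primitive $z \mapsto \Tr \log M_z$ which is well-defined since $M_z$ stays in a contractible neighborhood of the identity on the relevant subspace. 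I would lean on \cite[Section 2]{PW20} for the precise trace-class bookkeeping here, since the statement explicitly refers the reader there. The exactness then says $\Tr[\dd M_z M_z^{-1}] = \dd \phi$ for $\phi(z) = \log\det M_z$ (interpreted appropriately), which is item 2.

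For item 3, on $V_{z_0}$ all three operators $A_z, M_z, N_z$ are invertible, and from $A_z = M_z N_z$ we get $\dd A_z = \dd M_z\, N_z + M_z\, \dd N_z$, hence $\dd A_z\, A_z^{-1} = \dd M_z\, M_z^{-1} + M_z(\dd N_z\, N_z^{-1})M_z^{-1}$; taking $\Tr$ and using its conjugation-invariance together with the fact that $\dd N_z\, N_z^{-1}$ is trace class (because $N_z - \Id = P_z A_z P_z - P_z$ is finite rank and its derivative is too, while $N_z^{-1}$ is bounded) yields $\Tr[\dd A_z A_z^{-1}] = \Tr[\dd M_z M_z^{-1}] + \Tr[\dd N_z N_z^{-1}]$. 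For item 4, closedness of $\Tr[\dd N_z N_z^{-1}]$ follows from a direct computation $\dd(\dd N_z\, N_z^{-1}) = -\dd N_z \wedge \dd N_z^{-1} = \dd N_z\, N_z^{-1}\wedge \dd N_z\, N_z^{-1}$, whose trace vanishes by the cyclic property (the wedge of a one-form with itself is antisymmetric while $\Tr$ forces symmetry); alternatively, $\Tr[\dd N_z N_z^{-1}] = \dd\log\det(P_z A_z P_z|_{\Imm P_z})$ is locally exact, and closedness is automatic. Non-exactness on $V_{z_0}$ as a whole is seen by noting $\det(P_z A_z P_z|_{\Imm P_z}) = \prod \lambda_k(z)$ vanishes precisely where $A_z$ fails to be invertible, so integrating $\frac{1}{2\pi i}\Tr[\dd N_z N_z^{-1}]$ around a loop encircling such a zero gives the (nonzero) winding number, exactly as in the computation in the proof of Proposition~\ref{thm:lemma-4}.

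The main obstacle I expect is item 2 — making rigorous sense of ``$\Tr[\dd M_z M_z^{-1}]$ is exact'' when $M_z$ acts nontrivially on the infinite-dimensional subspace $\Imm Q_z$, i.e. justifying that $M_z$ possesses a well-defined determinant and logarithm in the appropriate trace-class sense so that a global primitive exists on $W_{z_0}$. Everything else is either the algebraic factorization \eqref{eq:nuova}, routine differentiation of Riesz projectors, or the standard trace-cyclicity manipulations; but the exactness in item 2 is the structural heart of the lemma (it is what kills the $M_z$-contribution in Proposition~\ref{thm:lemma-4}) and is the step where I would most carefully follow the trace-class constructions of \cite[Section 2]{PW20} rather than reprove them from scratch.
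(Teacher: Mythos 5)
The paper offers no independent proof of this lemma: its ``proof'' is a citation of \cite[Lemma 2.4 and Proposition 2.5]{PW20}, and your sketch reconstructs that argument along the same lines --- persistence of the spectral gap along $\s_c$ under small $\mathscr C^1$ perturbations, $\mathscr C^1$ dependence of the finite-rank Riesz projector $P_z$, the algebraic factorization $A_z=M_zN_z$, trace cyclicity for item 3, and the winding-number picture for item 4 --- while deferring the trace-class bookkeeping to the same reference, exactly as the paper does. So in substance your proposal is correct and follows the same route.

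One remark on the step you single out as the main obstacle (item 2): you do not need to make sense of $\log\det M_z$ on the infinite-dimensional block $\Imm Q_z$. The efficient argument is that $\Tr\big[\dd M_z\,M_z^{-1}\big]$ is closed (the same cyclicity computation you carry out for $N_z$, once the trace-class properties from \cite{PW20} are in place) and that $M_z$ is invertible at \emph{every} $z\in W_{z_0}$, since $\sigma\big(Q_zA_zQ_z|_{\Imm Q_z}\big)$ lies outside the disc of radius $c$ and hence is bounded away from $0$; because $W_{z_0}$ is convex --- which is precisely why convexity appears in the statement --- the Poincar\'e lemma then yields exactness. By contrast, $\Tr\big[\dd N_z\,N_z^{-1}\big]=\dd\log\det\big(P_zA_zP_z|_{\Imm P_z}\big)$ is only locally exact, and exactness can fail on $V_{z_0}$ because that set is punctured at the non-invertible points, which is exactly your item 4 observation and the computation used in Proposition~\ref{thm:lemma-4}.
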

\begin{proof}
	We refer the interested reader to \cite[Lemma 2.4 and Proposition 2.5]{PW20} for the proof. 
\end{proof}

\begin{lem} \label{lm:trace_det}
	Given $\varepsilon>0$,	let  $A\in \mathscr C^1\big((-\varepsilon, \varepsilon), \GL(W,H) \big)$ be a path of invertible operators on a Hilbert space $H$ having  the same domain  $W$  and let  $P\in \mathscr C^1\big((-\varepsilon, \varepsilon), \Lin(H) \big)$ be a path of finite rank projections. We assume that
	\begin{itemize}
		\item $ P_tA_t=A_tP_t$ for all $t \in (-\varepsilon, \varepsilon)$
		\item $t \mapsto P_tA_tP_t \in \mathscr C^1((-\epsilon,\epsilon),\Lin(H))$
	\end{itemize}
	Then we get that  $E_t\=P_tA_tP_t|_{\Imm P_t}$ is a linear map from $\Imm P_t$ to $\Imm P_t$ and  by setting   $N_t\=(\Id-P_t)+P_tA_tP_t$, the following equality holds:
	\[
	\Tr[\dd N_t N_t^{-1}]= \Tr[\dd (P_tA_tP_t)P_tA_t^{-1}P_t]=\dd \log\det E_t.
	\]	
\end{lem}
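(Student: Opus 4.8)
The plan is to reduce the operator-theoretic statement to a finite-dimensional determinant identity by restricting everything to the (finite-dimensional, moving) range of the projections. First I would observe that since $P_t$ commutes with $A_t$, the range $\Imm P_t$ is an invariant subspace for $A_t$, and $E_t \= P_tA_tP_t|_{\Imm P_t} \colon \Imm P_t \to \Imm P_t$ is invertible: indeed $A_t$ invertible and $P_tA_t = A_tP_t$ force $A_t$ to preserve the splitting $H = \Imm P_t \oplus \ker P_t$, and the block on $\Imm P_t$ is exactly $E_t$, which therefore inherits invertibility. Next I would compute $N_t^{-1}$ explicitly: since $N_t = (\Id - P_t) + P_tA_tP_t$ acts as the identity on $\ker P_t$ and as $E_t$ on $\Imm P_t$, we get $N_t^{-1} = (\Id - P_t) + P_t A_t^{-1} P_t$ (here the second summand means $E_t^{-1}$ extended by zero on $\ker P_t$; one checks $P_tA_t^{-1}P_t$ restricted to $\Imm P_t$ equals $E_t^{-1}$ using the commutation relation). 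With this in hand, $\dd N_t = \dd(P_tA_tP_t)$ as a derivative of the "identity-plus-finite-rank" decomposition only picks up the finite-rank part up to terms that I would show trace out: expanding $\dd N_t = -\dd P_t + \dd(P_tA_tP_t)$ and $N_t^{-1} = (\Id-P_t) + P_tA_t^{-1}P_t$, the product $\dd N_t\, N_t^{-1}$ splits into several terms, and using $P_t^2 = P_t$ (hence $\dd P_t\, P_t + P_t\, \dd P_t = \dd P_t$) together with the cyclicity of the trace, the cross terms involving $\dd P_t$ cancel or have vanishing trace, leaving $\Tr[\dd N_t\, N_t^{-1}] = \Tr[\dd(P_tA_tP_t)\, P_tA_t^{-1}P_t]$.

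Then I would identify this trace with the finite-dimensional one. Because $\dd(P_tA_tP_t)$ and $P_tA_t^{-1}P_t$ both vanish on $\ker P_t$ and map into $\Imm P_t$, the trace over $H$ equals the trace over the finite-dimensional space $\Imm P_t$, i.e. $\Tr[\dd(P_tA_tP_t)\,P_tA_t^{-1}P_t] = \tr[\dd E_t \cdot E_t^{-1}]$ (again using that the restriction of $P_tA_t^{-1}P_t$ to $\Imm P_t$ is $E_t^{-1}$). Finally, for a $\mathscr C^1$ path of invertible matrices $E_t$ on a fixed finite-dimensional space, Jacobi's formula gives $\dfrac{d}{dt}\log\det E_t = \tr(E_t^{-1}\dot E_t) = \tr(\dot E_t E_t^{-1})$, which as a one-form reads $\dd\log\det E_t = \tr[\dd E_t\, E_t^{-1}]$; chaining the equalities yields the claim.

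The one subtlety worth care — and the main obstacle — is that $\Imm P_t$ moves with $t$, so $\dd E_t$ is not literally the derivative of a matrix on a fixed space; $E_t$ is a section of the bundle $\mathrm{Hom}(\Imm P_t, \Imm P_t)$. I would handle this either by trivializing the bundle locally via a $\mathscr C^1$ family of isomorphisms $U_t \colon \Imm P_0 \to \Imm P_t$ (available since $t \mapsto P_t$ is $\mathscr C^1$, e.g. Kato's transformation operator) and checking that both sides of the identity are invariant under this change of frame — the $\det$ of a conjugated matrix is unchanged, and $\tr[\dd E_t\,E_t^{-1}]$ transforms by an exact form that integrates to the constant $\log\det(U_1U_0^{-1}) - \log\det(\Id)$ type discrepancy which is in fact zero for the trace of the one-form itself since $\Tr[\dd N_tN_t^{-1}]$ is manifestly frame-independent — or, more cleanly, by working throughout with $N_t$ on the fixed space $H$ and only invoking $\det E_t := \det(N_t|_{\text{finite part}})$ as a definition, so that the statement becomes a tautology once the trace computation above is done. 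Either route reduces the proof to the trace bookkeeping with $\dd P_t$-terms, which is the only genuine computation; I would expect that bookkeeping, rather than any conceptual point, to be where most of the writing goes, and the cited \cite[Section 2]{PW20} handles the analogous manipulations for the operators $M_z, N_z$.
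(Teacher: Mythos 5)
Your plan is correct: the reduction to the invariant splitting $H=\Imm P_t\oplus\ker P_t$, the explicit inverse $N_t^{-1}=(\Id-P_t)+P_tA_t^{-1}P_t$, the cancellation of the $\dd P_t$-terms via $P_t\,\dd P_t\,P_t=0$ together with cyclicity of the trace, and the passage to the finite-dimensional Jacobi (equivalently, Fredholm-determinant) formula with due care for the moving range $\Imm P_t$ is exactly the standard argument for this identity. The paper gives no proof of its own here — it only cites \cite[Proposition 2.6]{PW20} — and your outline is essentially that argument, with the $\dd P_t$ bookkeeping you single out being the only genuine computation.
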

\begin{proof}
	We refer the interested reader to \cite[Proposition 2.6]{PW20} for the proof. 
\end{proof}

\newpage

\vspace{1cm}
\noindent
\textsc{Prof. Alessandro Portaluri}\\
DISAFA\\
Università degli Studi di Torino\\
Largo Paolo Braccini 2 \\
10095 Grugliasco, Torino\\
Italy\\
Website: \url{https://sites.google.com/view/alessandro-portaluri/}\\
E-mail: \email{alessandro.portaluri@unito.it}

\vspace{1cm}
\noindent
\textsc{Prof. Li Wu}\\
Department of Mathematics\\
Shandong University\\
Jinan,Shandong, 250100\\
The People's Republic of China \\
China\\
E-mail: \email{vvvli@sdu.edu.cn}

\vspace{1cm}
\noindent
\textsc{Dr. Ran Yang}\\
School of Science\\
East China  University of Technology\\
Nanchang, Jiangxi, 330013\\
The People's Republic of China \\
China\\
E-mail:\email{yangran2019@ecit.cn}

\end{document}